\newcommand{\llangle}{\langle \!\langle}
\newcommand{\rrangle}{\rangle \!\rangle}
\newcommand{\fps}[2]{#1 \llangle #2 \rrangle}
\newcommand{\pol}[2]{#1 \langle #2\rangle}
\newcommand{\Ln}{\mathscr{L}}
\newcommand{\Gu}{\mathscr{G}}
\newcommand{\cayG}{\mathscr{C}} 
\newcommand{\G}[1]{\normalfont{\texttt{[}}\mathord{#1}\normalfont{\texttt{]}}}
\newcommand{\Gone}{\G{\mathtt{1}}}
\newcommand{\Gzero}{\G{\mathtt{0}}}
\newcommand{\Gdia}{\bullet}
\newcommand{\Gdot}{\mathbin{\G{\cdot}}}
\newcommand{\Gplus}{\mathbin{\G{+}}}
\newcommand{\Gstar}{\mathbin{\G{\ast}}}
\theoremstyle{plain}
	\newtheorem{theorem}{Theorem}
	\newtheorem{definition}{Definition}
	\newtheorem{lemma}{Lemma}
	\newtheorem{corollary}{Corollary}
	\newtheorem*{claim}{Claim}
\theoremstyle{definition}
	\newtheorem{example}{Example}
\title{Completeness of Finitely Weighted Kleene Algebra With Tests\thanks{
		Published version: I. Sedlár: Completeness of Finitely Weighted Kleene Algebra with Tests. In: G. Metcalfe, T. Studer, R. de Queiroz (Eds.): \emph{Logic, Language, Information, and Computation (WoLLIC 2024)}, pp.~210--224. Lecture Notes in Computer Science, vol 14672. Springer, Cham, 2024. 	DOI: \href{https://doi.org/10.1007/978-3-031-62687-6_14}{10.1007/978-3-031-62687-6\_14}
	}
}
\author{Igor Sedlár}
\affil{The Czech Academy of Sciences, Institute of Computer Science \authorcr
	Prague, The Czech Republic}
\date{ }	
\begin{document}
\maketitle

\vspace*{-1cm}
\begin{abstract}
Building on Ésik and Kuich's completeness result for finitely weighted Kleene algebra, we establish relational and language completeness results for finitely weighted Kleene algebra with tests. Similarly as Ésik and Kuich, we assume that the finite semiring of weights is commutative, partially ordered and zero-bounded, but we also assume that it is integral. We argue that finitely weighted Kleene algebra with tests is a natural framework for  equational reasoning about weighted programs in cases where an upper bound on admissible weights is assumed.
\end{abstract}

\section{Introduction}

Ésik and Kuich \cite{EsikKuich2001} generalize the completeness result of Kozen \cite{Kozen1994}, which connects Kleene algebras with the algebra of regular languages, to the case of weighted regular languages, or formal power series. In particular, their result applies to a weighted generalization of Kleene algebra where the semiring of weights is finite, commutative, zero-bounded (or positive) and partially ordered. Building on their work, we establish two completeness results for a weighted generalization of Kleene algebras with tests \cite{Kozen1997}. First, we establish completeness with respect to the algebra of weighted guarded languages using a reduction to weighted regular languages similar to the one used by Kozen and Smith \cite{KozenSmith1997} to prove completeness for non-weighted Kleene algebras with tests. Second, we establish completeness with respect to weighted transition systems by using a Cayley-like construction going back to the work of Pratt \cite{Pratt1980} on (non-weighted) dynamic algebras. In addition to the assumptions of Ésik and Kuich, we need to assume that the semiring of weights is also integral. 

Weighted Kleene algebras with tests provide a framework for equational reasoning about properties of weighted programs \cite{BatzEtAl2022}, a generalization of standard programs articulating the idea that computation may carry some sort of weight. We have pointed out this connection in earlier work \cite{Sedlar2023a}; this paper contributes with a slightly simpler but more general definition of a weighted Kleene algebra with tests, and the completeness results. We also argue that although \emph{finitely} weighted structures are a simplification with respect to the cases usually studied when weighted computation is concerned \cite{DrosteEtAl2009,KuichSalomaa1986,SalomaaSoittola1978} they are still practically relevant since they represent the assumption of an upper bound on admissible weights. 


The paper is organized as follows. Section \ref{sec:semirings} recalls semirings, their interpretation in terms of weights, and discusses formal power series and weighted relations. Section \ref{sec:WeKAT} introduces weighted Kleene algebras with tests and outlines their applications in reasoning about weighted programs. Completeness of finitely weighted Kleene algebras with tests with respect to the algebra of weighted guarded languages is established in Section \ref{sec:Lan} and Section \ref{sec:Rel} establishes completeness with respect to weighted transition systems. The concluding Section \ref{sec:conclusion} summarizes the paper and outlines some interesting problems we leave for future work. Details of some of the proofs are given in the technical appendix.

\section{Semirings and weighted structures}\label{sec:semirings}

In this section we recall the basic background information on semirings, and we discuss their interpretation as algebras of weights. Then we recall two kinds of weighted structures that will be important in this paper: formal power series (weighted languages) and weighted relations (square matrices over a semiring). Our discussion is based on \cite{KuichSalomaa1986}, and details are provided for the benefit of the non-specialist reader. At the end of the section, we discuss the usefulness of working with finite semirings of weights.

Recall that a \emph{semiring} is a structure $\langle X, +, \cdot, 0, 1\rangle$ where $\langle X, \cdot, 1\rangle$ is a monoid, $\langle X, +, 0\rangle$ is a commutative monoid, multiplication $\cdot$ distributes over addition $+$ from both sides, and $0$ is the multiplicative annihilator ($0 \cdot x = 0 = x \cdot 0$ for all $x \in X$). A semiring $X$ is \emph{commutative} if $x \cdot y = y \cdot x$ for all $x,y \in X$; it is (additively) \emph{idempotent} iff $x + x = x$ for all $x \in X$; and $X$ is \emph{partially ordered} if there is a partial order $\leq$ on $X$ such that both $\cdot$ and $+$ are monotone with respect to $\leq$. A partially ordered semiring $X$ is \emph{zero-bounded} if $0 \leq x$ for all $x \in X$, and it is \emph{integral} if $x \leq 1$ for all $x \in X$. Note that each idempotent semiring is partially ordered ($x \leq y$ iff $x + y = y$) and zero-bounded. A commutative partially ordered zero-bounded semiring is called a \emph{copo-semiring}; a copo-semiring which is also integral is called a \emph{copi-semiring}.

Semirings are natural models of \emph{weights} (for instance, the amount of some resource needed to perform an action). Multiplication represents merge of weights ($x \cdot y$ as weight $x$ together with weight $y$), the multiplicative unit $1$ represents inert weight (resp.~no weight) and the annihilator $0$ represents absolute weight. Partially ordered semirings represent the idea that weights are ordered, but the order may not be linear. In zero-bounded ordered semirings $0$ is the ``worst'' weight and in integral semirings $1$ is the ``best'' one. Idempotent semirings come with a natural interpretation of addition, which can be seen as selecting the ``best'' weight out of a pair of weights. 

\begin{example}
A familiar example of a finite semiring is the \emph{Boolean semiring} $\langle \{ 0, 1 \}, \lor, \land, 0, 1\rangle$, representing an ``all or nothing'' perspective on weights. Real numbers $\mathbb{R}$ with their usual addition and multiplication also form a semiring, as do their subalgebras such as integers $\mathbb{Z}$ and natural numbers $\mathbb{N}$. A semiring often used in connection with shortest paths is the \emph{tropical semiring} over extended natural numbers $\mathbb{N} \cup \{ \infty \}$ where semiring multiplication is addition on $\mathbb{N} \cup \{ \infty \}$ (where $\infty + x = \infty = x + \infty$) and semiring addition is the minimum function, which is idempotent. This means that the  multiplicative identity $0$ (the natural number zero) is the greatest element with respect to the partial order induced by $\mathrm{min}$ and $\infty$, where the latter is the annihilator and the least element. A semiring often used in modelling imprecise or vague notions is the \emph{{\L}ukasiewicz semiring} on the real unit interval $[0,1]$ where $\max$ is semiring addition (idempotent) and the semiring multiplication is the {\L}ukasiewicz t-norm: $x \otimes y = \max\{ 0, x+y-1 \}$. Of course, $1$ is the multiplicative identity and $0$ is the annihilator.  
\end{example}    

Recall that if $\Sigma$ is a set, then $\Sigma^{*}$ is the set of all finite sequences of elements of $\Sigma$, including the empty sequence $\epsilon$. If $\Sigma$ is seen as an alphabet, $\Sigma^{*}$ can be seen as the set of all finite words over $\Sigma$. Algebraically speaking, $\Sigma^{*}$ is the free monoid generated by $\Sigma$ with $\epsilon$ as the multiplicative identity and word concatenation as multiplication. A \emph{language} over $\Sigma$ is a subset of $\Sigma^{*}$.
 A \emph{formal power series} over $\Sigma$ with coefficients in a semiring $S$ is a function from $\Sigma^{*}$ to $S$. Formal power series can be seen as \emph{weighted languages} and languages over $\Sigma$ correspond to formal power series over $\Sigma$ with coefficients in the Boolean semiring. The set of all formal power series over $\Sigma$ with coefficients in $S$ is usually denoted as $\fps{S}{\Sigma^{*}}$. A \emph{polynomial} is a formal power series $r$ such that the set of $w \in \Sigma^{*}$ with $r(w) \neq 0$ is finite. The set of polynomials in $\fps{S}{\Sigma^{*}}$ is usually denoted as $\pol{S}{\Sigma^{*}}$. Examples of polynomials in $\pol{S}{\Sigma^{*}}$ include formal power series usually denoted as $sw$, for $s \in S$ and $w \in \Sigma^{*}$, such that $sw(u) = s$ if $w = u$ and $sw(u) = 0$ otherwise. It is customary to denote $s\epsilon$ as $s$ and $1w$ as $w$ for $w \neq \epsilon$. The set $\fps{S}{\Sigma^{*}}$ has a structure of a semiring where $0, 1 \in \fps{S}{\Sigma^{*}}$ are polynomials of the type just mentioned, and
 \[ (r_1 + r_2)(w) = r_1(w) + r_2(w) \qquad
 (r_1 \cdot r_2)(w) = \sum_{w = w_1w_2} r_1(w_1) \cdot r_2(w_2) \, .\]
 (Note that the sum exists in arbitrary $S$ since there are only finitely many $w_1, w_2$ such that $w = w_1w_2$.) If $S$ is finite, then we may define $s^{*} = \sum_{n \in \omega} s^{n}$, where $s^{0} = 1$ and $s^{n+1} = s^{n} \cdot s$.\footnote{Not only if $S$ is finite, of course, but we need not consider this more general setting in this paper. See \cite{DrosteEtAl2009,KuichSalomaa1986,SalomaaSoittola1978}.} The $^*$ may be lifted to formal power series:
 \[ r^{*}(w) = \sum_{w_1 \ldots w_n \in [w]} r(w_1) \cdot \ldots \cdot r(w_n) \, ,\]
where $[w]$ is the set of all factors of $w$. (Equivalently, $r^{*}(w) = \sum_{n \in \omega} r^{n}(w)$.) Addition, multiplication and star are known as \emph{rational operations} on formal power series. The smallest set of formal power series in $\fps{S}{\Sigma^{*}}$ that contains all polynomials and is closed under the rational operations is known as the set of \emph{rational power series} and denoted as $\fps{S^{\mathrm{rat}}}{\Sigma^{*}}$. Rational power series over $\Sigma$ with coefficients in the Boolean semiring are the \emph{regular languages} over $\Sigma$.
 
 Let $S$ be a semiring. An \emph{$S$-weighted relation} on a set $Q$ is a function from $Q \times Q$ to $S$. An $S$-weighted relation on $Q$ may be seen as a (potentially infinite) $Q \times Q$ matrix with entries in $S$. Matrix addition can then be used to define weighted union of relations on $Q$ and, if $S$ is finite, matrix multiplication can be used to define weighted relational composition of relations on $Q$: 
 \[ (M+N)_{q, q'} = M_{q,q'} + N_{q,q'} \qquad
 (M \cdot N)_{q,q'} = \sum_{p \in Q} M_{q,p} \cdot N_{p,q'} \, .
 \]
 The set of all $S$-weighted relations on $Q$ ($Q \times Q$ matrices with entries in $S$) forms a semiring, where $0$ is the zero matrix (all entries are $0$) and $1$ is the identity matrix (the diagonal matrix where all entries on the diagonal are $1$). If $S$ is finite, then the star of a $Q \times Q$ matrix with entries in $S$ \cite{Conway1971,KuichSalomaa1986} represents weighted reflexive transitive closure of the corresponding $S$-weighted relation on $Q$: $M^{*} = \sum_{n \in \omega} M^{n}$, where $M^{0} = 1$ and $M^{n+1} = M^{n} \cdot M$. Alternatively, $M^{*}_{q,q'} = \sum_{x \in [q,q']} M(x)$ where $[q,q']$ is the set of all finite paths from $q$ to $q'$ and $M(q_1 \ldots q_n) = M_{q_1, q_2} \cdot \ldots \cdot M_{q_{n-1}, q_n}$. 
 
Weighted relations represent the idea that the transition from $q$ to $q'$ carries a weight, $M_{q,q'}$. For instance, if $\mathbb{N} \cup \{ \infty \}$ is used and $M$ gives each edge in a directed graph over $Q$ a weight of $1 \in \mathbb{N}$, then $M_{q,q'}$ is the length of the shortest path from $q$ to $q'$ -- literally the smallest number of steps in the graph you need to take in order to get from $q$ to $q'$ ($\infty$ if there is no path from $q$ to $q'$). Assuming an infinite semiring of weights, such as $\mathbb{N} \cup \{ \infty \}$, means that ``no path is too long \textit{a priori}''. On the other hand, in most practical situations it is sensible (or even necessary), to set a cut-off point beyond which all paths \emph{are} ``too long''. Reasoning about weights with a cut-off point in place corresponds to working with a \emph{finite} semiring of weights. 

\begin{example}
A \emph{finite tropical semiring} is any $N \cup \{ \infty \}$ where $N$ is a non-empty initial segment of $\mathbb{N}$, semiring multiplication is truncated addition ($n + m = \infty$ if the ``ordinary sum'' of $n$ and $m$ is not in $N$), semiring addition is $\mathrm{min}$, the multiplicative identity is $0$ and the annihilator is $\infty$. For $n > 1$, the \emph{$n$-element {\L}ukasiewicz semiring} is defined on the domain $\{ \frac{m}{n - 1} \mid m \in [0, n - 1] \}$ and the operations are defined as in the infinite case. Note that both of these examples yield \emph{nilpotent} semirings: for each $s \neq 1$ there is $n \in \omega$ such that $s^{n} = 0$.
\end{example}

\section{Weighted Kleene algebra with tests}\label{sec:WeKAT}

In this section, we recall Kleene algebras \cite{Kozen1994}, we formulate a notion of weighted Kleene algebra based on Ésik and Kuich's definition \cite{EsikKuich2001}, and we state their completeness result (Section \ref{sec:WeKAT-1}). Then we recall Kleene algebras with tests and we formulate a notion of weighted Kleene algebras with tests (\ref{sec:WeKAT-2}). Finally, applications to reasoning about weighted programs are outlined (\ref{sec:WeKAT-3}). 

\subsection{Kleene algebras}\label{sec:WeKAT-1}

Recall that a \emph{Kleene algebra} \cite{Kozen1994} is an idempotent semiring $X$ with a unary operation $^{*}$ satisfying, for all $x, y, z \in X$ the following \emph{unrolling} (left column) and \emph{fixpoint} laws (right column):
\begin{align}
1 + (x \cdot x^{*}) & = x^{*} &
y + (x \cdot z) \leq z & \implies x^{*} \cdot y \leq z\\
1 + (x^{*} \cdot x) & = x^{*} &
y + (z \cdot x) \leq z & \implies y \cdot x^{*} \leq z \, .
\end{align}
Ésik and Kuich \cite{EsikKuich2001} do not assume the unrolling law $1 + (x^{*} \cdot x) = x^{*}$ but it can be shown that it can be derived from the rest. We usually write $xy$ instead of $x \cdot y$. Some useful equalities that follow from these axioms are the \emph{sliding} and \emph{denesting} laws: $(xy)^{*}x = x(yx)^{*}$ and $(x + y)^{*} = x^{*}(yx^{*})^{*}$.

\begin{definition}
Let $S$ be a finite semiring. A \emph{Kleene $S$-algebra} is a Kleene algebra $X$ together with a binary operation \[ \odot : X \times S \to X \] such that (the additive monoid reduct of) $X$ forms a right $S$-semimodule and 
\begin{gather}
(x \cdot y) \odot s = x \cdot (y \odot s) = (x \odot s) \cdot y\label{e:s-comm}\\
1 \odot s^{*} \leq (1 \odot s)^{*}\label{e:s-star}
\end{gather}
\end{definition}
(Recall that $s^{*} = \sum_{n \in \omega} s^{n}$.) It follows that $(1 \odot s^{*}) = (1 \odot s)^{*}$; see \cite{EsikKuich2001}.\footnote{A similar definition could be formulated for the case of infinite $S$ where it is assumed that $s^{*}$ is defined for all $s$. For instance, one could assume that $S$ is a Kleene algebra.}

The class of all Kleene $S$-algebras for a fixed $S$ is denoted as $\mathsf{KA}(S)$; moreover, we define $\mathsf{KA}(\mathsf{X}) = \bigcup_{S \in \mathsf{X}} \mathsf{KA}(S)$. 
 Ésik and Kuich use the \emph{left} $S$-action on $K$ instead of the right one (that is, $\odot : S \times X \to X$).

The following syntactic definition is not used explicitly by Ésik and Kuich, but we will need it in this paper. (Ésik and Kuich state their result using the notion of free algebra.)

\begin{definition}
Let $\Sigma$ be a finite alphabet and $S$ a finite semiring. The set of $(\Sigma, S)$-expressions $\mathrm{Exp}(\Sigma, S)$ is defined using the following grammar:
\[ e, f := \mathtt{a} \mid e \odot s \mid e + f \mid e \cdot f \mid e^{*} \mid \mathtt{0} \mid \mathtt{1}\] where $\mathtt{a} \in \Sigma$ and $s \in S$.
\end{definition}

A Kleene $S$-algebra \emph{model} is a Kleene $S$-algebra $X$ together with a homomorphism $h : \mathrm{Exp}(\Sigma, S) \to X$, where it is understood that $h(e \odot s) = h(e) \odot s$. The notions of validity of an equation in a class of Kleene $S$-algebras, and equational theory of a class of Kleene $S$-algebras are defined in the usual way. Note that $\fps{S^{\mathrm{rat}}}{\Sigma^{*}} \in \mathsf{KA}(S)$.

\begin{theorem}[Ésik and Kuich \cite{EsikKuich2001}]\label{thm:EK}
If $S$ is a finite copo-semiring, then, for all $e,f \in \mathrm{Exp}(\Sigma, S)$, \[ \mathsf{KA}(S) \models e \approx f \iff \fps{S^{\mathrm{rat}}}{\Sigma^{*}} \models e \approx f \, .\]
\end{theorem}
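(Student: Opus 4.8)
\emph{Proof sketch.} The left-to-right implication is immediate, since $\fps{S^{\mathrm{rat}}}{\Sigma^{*}}$ is itself a Kleene $S$-algebra; the content lies in the converse. That converse is a completeness statement --- informally, that $\fps{S^{\mathrm{rat}}}{\Sigma^{*}}$, with each $\mathtt a\in\Sigma$ read as the monomial $\mathtt a$, is the \emph{free} Kleene $S$-algebra on $\Sigma$ --- and it amounts to showing that whenever $e$ and $f$ denote the same rational power series, the equation $e\approx f$ is derivable from the axioms of $\mathsf{KA}(S)$ (the reverse implication, soundness of those axioms in $\fps{S^{\mathrm{rat}}}{\Sigma^{*}}$, being routine). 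The plan is to prove this by lifting Kozen's automata-theoretic completeness proof for unweighted Kleene algebra to the finitely weighted setting; the one genuinely new point is that \emph{finiteness} of $S$ is exactly what keeps all automata in play finite.

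First I would set up the matrix algebra: for any Kleene $S$-algebra $X$ the $n\times n$ matrices over $X$ again form a Kleene $S$-algebra, with matrix $+$, $\cdot$, $\mathtt 0$, $\mathtt 1$, with $\odot$ applied entrywise, and with star given by the usual recursive $2\times 2$-block formula --- this last being precisely where Kozen's unrolling and fixpoint laws do the work, while compatibility with $\odot$ rests on the semimodule axioms and \eqref{e:s-comm}. The fixpoint laws also yield the \emph{solution lemma}: $M^{*}\mathbf b$ is the least solution of $\mathbf x=M\mathbf x+\mathbf b$. Then, by induction on $e\in\mathrm{Exp}(\Sigma,S)$, I would attach to $e$ a finite weighted automaton --- a dimension $n$, a row vector $\lambda_e$ and a column vector $\gamma_e$ over $S$, and a transition matrix $M_e$ whose entries have the form $\sum_{\mathtt a\in\Sigma}(\mathtt a\odot s_{\mathtt a})$ --- together with a $\mathsf{KA}(S)$-derivation of $e\approx\lambda_e M_e^{*}\gamma_e$ (the right-hand side being the expression obtained by expanding the matrix star). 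The clauses for $+$, $\cdot$ and ${}^{*}$ are the standard Thompson constructions, their correctness coming from the solution lemma; the clause $e\odot s$ just scales $\gamma_e$ by $s$ using \eqref{e:s-comm}; the base cases are one- or two-dimensional. By the Kleene--Sch\"utzenberger theorem (classical; finiteness of $S$ guarantees that the stars involved exist), the series actually denoted by $e$ in $\fps{S}{\Sigma^{*}}$ is $\lambda_e M_e^{*}\gamma_e$, so the hypothesis of the theorem becomes: the automata for $e$ and $f$ recognise the same weighted language, and it remains to show that equivalent weighted automata have $\mathsf{KA}(S)$-provably equal solution expressions.

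For this I would follow Kozen's determinise--minimise--compare scheme. A weighted subset construction replaces the automaton for $e$ by a deterministic one whose states are the vectors in $S^{1\times n}$ reachable from $\lambda_e$ under $\mathbf v\mapsto\mathbf v M_{\mathtt a}$ --- a \emph{finite} set precisely because $S$ is finite --- and one derives in $\mathsf{KA}(S)$ that the two automata have equal solution expressions: there is a matrix $P$ over $S$ intertwining the two transition structures, and the standard Kleene-algebra ``bisimulation'' identity ($XN=MX$ implies $XN^{*}=M^{*}X$, itself a consequence of the unrolling and fixpoint laws) converts this into the desired equality, with commutativity of $S$ doing the work of pushing the $S$-scalars in $P$ past matrix products. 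One then minimises the deterministic automaton by quotienting its states by behaviour; for a rational series over a finite semiring there are only finitely many left residuals, so the minimal deterministic automaton is finite and its states are in bijection with the set of residuals of the series, which depends on the series alone. This minimisation is made provably sound by the same intertwining-plus-bisimulation argument. Hence two automata recognising the same series have isomorphic minimal deterministic automata, so $e$ and $f$ are each provably equal to the solution expression of one common minimal automaton, whence $\mathsf{KA}(S)\vdash e\approx f$.

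The main obstacle is this last stage: carrying the subset construction and the minimisation out \emph{inside} $\mathsf{KA}(S)$, so that every step of both constructions is matched by a derivation from Kozen's axioms together with the semimodule laws. This is where the copo-semiring hypotheses earn their keep --- finiteness to keep the determinisation finite, commutativity and zero-boundedness to make the order-theoretic fixpoint reasoning and the bookkeeping of weights transfer from the Boolean to the weighted case. Everything before it --- the matrix algebra, Kleene's theorem in provable form, and Kleene--Sch\"utzenberger --- is a routine adaptation of the corresponding unweighted arguments.
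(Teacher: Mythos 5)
The paper does not actually prove this theorem: it is imported verbatim from Ésik and Kuich \cite{EsikKuich2001}, so there is no in-paper argument to compare your sketch against. What you have written is an accurate reconstruction of the strategy of the cited proof --- a weighted lift of Kozen's automata-theoretic completeness argument via matrix Kleene $S$-algebras, a provable Kleene theorem, determinization over the finite state space of row vectors in $S^{1\times n}$, and the transfer identity $XN=MX\Rightarrow XN^{*}=M^{*}X$ --- and it correctly locates the role of each hypothesis (finiteness for termination of determinization, commutativity for moving scalars past products, and no use of integrality, which indeed is not assumed here). The only step I would not take on faith from your sketch is the provable minimization/uniqueness argument for weighted deterministic automata, which is where the details in \cite{EsikKuich2001} do the real work.
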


The \emph{standard language interpretation} of $\mathrm{Exp}(\Sigma, S)$ is the unique homomorphism $\Ln : \mathrm{Exp}(\Sigma, S) \to \fps{S}{\Sigma^{*}}$ such that $\Ln (\mathtt{a}) = 1\mathtt{a}$.

\begin{corollary}\label{coro:EK}
If $S$ is a finite copo-semiring, then, for all $e,f \in \mathrm{Exp}(\Sigma, S)$, \[ \mathsf{KA}(S) \models e \approx f \iff \Ln (e) = \Ln (f) \, .\]
\end{corollary}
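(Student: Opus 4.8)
\quad The plan is to deduce this from Ésik and Kuich's completeness result. Concretely, I will argue that $\fps{S^{\mathrm{rat}}}{\Sigma^{*}}$, equipped with the elements $1\mathtt{a}$ for $\mathtt{a}\in\Sigma$ --- equivalently, with the standard interpretation $\Ln$ --- is the free Kleene $S$-algebra on $\Sigma$ in the class $\mathsf{KA}(S)$; this is exactly what Ésik and Kuich establish in the free-algebra formulation of their theorem, and the syntactic Theorem~\ref{thm:EK} is its reformulation. Two small checks are needed to set this up. First, $\Ln$ really does take values in $\fps{S^{\mathrm{rat}}}{\Sigma^{*}}$: the letters go to the polynomials $1\mathtt{a}$, the constants $\mathtt{0},\mathtt{1}$ to $0,1$, the rational operations are handled by the very definition of rational series, and $\Ln(e\odot s)=\Ln(e)\odot s=\Ln(e)\cdot(s\epsilon)$ is rational because $s\epsilon$ is a polynomial; a routine induction on $e$ does it. Second, $\Ln$ is surjective onto $\fps{S^{\mathrm{rat}}}{\Sigma^{*}}$: every polynomial $\sum_i s_i w_i$ is $\Ln$ of an expression assembled from letters, $\odot$ and $+$, and this property is preserved by the rational operations, so an induction on the construction of rational series applies.

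Granting freeness, both directions are short. Since $(\fps{S^{\mathrm{rat}}}{\Sigma^{*}},\Ln)$ is a Kleene $S$-algebra model, $\mathsf{KA}(S)\models e\approx f$ implies that $e\approx f$ holds in this model, i.e.\ $\Ln(e)=\Ln(f)$. Conversely, suppose $\Ln(e)=\Ln(f)$ and let $(X,h)$ be an arbitrary Kleene $S$-algebra model; by the universal property of the free algebra there is a homomorphism $\bar h\colon\fps{S^{\mathrm{rat}}}{\Sigma^{*}}\to X$ with $\bar h(1\mathtt{a})=h(\mathtt a)$ for all $\mathtt{a}$, and then $\bar h\circ\Ln$ and $h$ are homomorphisms on $\mathrm{Exp}(\Sigma,S)$ agreeing on the generators, hence equal; therefore $h(e)=\bar h(\Ln(e))=\bar h(\Ln(f))=h(f)$. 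As $(X,h)$ was arbitrary, $\mathsf{KA}(S)\models e\approx f$. (Alternatively, one first invokes Theorem~\ref{thm:EK} to reduce $\mathsf{KA}(S)\models e\approx f$ to $\fps{S^{\mathrm{rat}}}{\Sigma^{*}}\models e\approx f$ and runs the same argument with $X=\fps{S^{\mathrm{rat}}}{\Sigma^{*}}$.)

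The real content, and the step I expect to be the main obstacle, is the freeness claim itself: that $\Ln$-equivalent expressions are already identified in the free Kleene $S$-algebra on $\Sigma$, equivalently that $\Ln$-equivalence is closed under substitution. If one does not want to cite Ésik and Kuich's free-algebra formulation and instead works only from Theorem~\ref{thm:EK} as stated, this amounts to proving: $\Ln(e)=\Ln(f)$ implies $\Ln(\sigma e)=\Ln(\sigma f)$ for every homomorphic substitution $\sigma\colon\mathrm{Exp}(\Sigma,S)\to\mathrm{Exp}(\Sigma,S)$. The natural approach is a substitution lemma for formal power series --- define the substitution $r\{t_{\mathtt a}/\mathtt a\}$ coefficientwise by $r\{t_{\mathtt a}/\mathtt a\}(u)=\sum_{w\in\Sigma^{*}}r(w)\cdot t_w(u)$, with $t_{\mathtt a_1\cdots\mathtt a_n}=t_{\mathtt a_1}\cdots t_{\mathtt a_n}$ and $t_\epsilon=1$ --- and to show by induction on $e$ that $\Ln(\sigma e)=\Ln(e)\{\Ln(\sigma\mathtt a)/\mathtt a\}$, so that $\Ln(\sigma e)$ depends only on $\Ln(e)$. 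Here finiteness and zero-boundedness of $S$ are essential: they guarantee that the (a priori infinite) defining sums converge, since the partial sums over words of bounded length form an increasing chain in the finite poset $S$ and hence stabilise. The genuinely delicate case of the induction is the Kleene star, where one must check that substitution commutes with $(\cdot)^{*}$ even when the substituted series have nonzero constant terms.
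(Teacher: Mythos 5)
Your proposal is correct and takes essentially the same route as the paper: the paper's entire proof of this corollary is the single observation that every $r \in \fps{S^{\mathrm{rat}}}{\Sigma^{*}}$ equals $\Ln(e_r)$ for some expression $e_r$ --- i.e.\ your surjectivity check --- with everything else delegated to the citation of Ésik and Kuich's Theorem~\ref{thm:EK}. Your further discussion of freeness and of closure of $\Ln$-equivalence under substitution correctly identifies the content that this citation is implicitly carrying, which the paper itself does not spell out.
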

\begin{proof}
The non-trivial implication follows from the fact that for every $r \in \fps{S^{\mathrm{rat}}}{\Sigma^{*}}$ there is $e_{r}$ such that $r = \Ln (e_r)$.
\end{proof}

\subsection{Kleene algebras with tests}\label{sec:WeKAT-2}

Recall that a Kleene algebra with tests \cite{Kozen1997} is a Kleene algebra $X$ with a distinguished $B \subseteq X$ such that $\langle B, +, \cdot, 0, 1\rangle$ is a subalgebra of $X$ and a bounded distributive lattice, and $\,^{-}$ is an unary operation on $B$ such that $x \cdot \bar{x} = 0$ and $x + \bar{x} = 1$ for all $x \in B$. Hence, $B$ forms a Boolean algebra. Intuitively, elements of $B$ represent \emph{Boolean tests}. 

\begin{definition}
Let $S$ be a semiring. A \emph{Kleene $S$-algebra with tests} is a Kleene $S$-algebra $X$ that is also a Kleene algebra with tests.
\end{definition}
Recall that every Kleene algebra is also a Kleene algebra with tests (take $B = \{ 0, 1 \}$ with $\,^{-}$ defined in the obvious way). Hence, every Kleene $S$-algebra is a Kleene $S$-algebra with tests. Conversely, every Kleene $S$-algebra with tests becomes a Kleene $S$-algebra if one ``disregards'' the tests. The class of all Kleene $S$-algebras with tests for a fixed $S$ is denoted as $\mathsf{KAT}(S)$, and $\mathsf{KAT(X)}$ is defined in the obvious way. Concrete examples of Kleene $S$-algebras with tests will be discussed in the following two sections.

\begin{definition}
Let $S$ be a finite semiring and $\Sigma, \Phi$ be two finite mutually disjoint alphabets. The set of $\Phi$-tests $\mathrm{Te}(\Phi)$ is defined using the following grammar:
\[ b, c := \mathtt{p} \mid \bar{b} \mid b + c \mid b \cdot c \mid \mathtt{0} \mid \mathtt{1} \, ,\]
where $\mathtt{p} \in \Phi$. The set of $(\Sigma, \Phi, S)$-expressions $\mathrm{Exp}(\Sigma, \Phi, S)$ is defined using the following grammar:
\[ e, f := \mathtt{a} \mid b \mid  e \odot s \mid e + f \mid e\cdot f \mid e ^{*} \, ,\] where $\mathtt{a} \in \Sigma$, $s \in S$, and $b \in \mathrm{Te}(\Phi)$.
\end{definition}

A \emph{$\mathsf{KAT}(S)$-model} is a Kleene $S$-algebra with tests $X$ together with a homomorphism $h : \mathrm{Exp}(\Sigma, \Phi, S) \to X$ where $h(e \odot s) = h(e) \odot s$ and $h(\mathtt{p}) \in B$. The notion of validity in a class of Kleene $S$-algebras with tests and the equational theory of a class of Kleene $S$-algebras with tests are defined in the usual way.

\subsection{Weighted programs}\label{sec:WeKAT-3}

Batz et al.~\cite{BatzEtAl2022} consider an extension of the language of \emph{while programs} (containing variable assignment commands, the skip command, sequential composition, conditionals and while loops) with \emph{non-deterministic branching} and \emph{weighting}. For an element $s \in S$ of some fixed semiring of weights, the command ``add $s$'' merges the weight of the current computation path with $s$. Batz et al.~\cite{BatzEtAl2022} argue that these \emph{weighted programs} constitute a useful formalism that captures certain mathematical models (such as optimization problems) in an intuitive algorithmic way. They also argue that weighted programs generalize \emph{probabilistic programs}. For further details, the reader is referred to \cite{BatzEtAl2022}.

It is well known that Kleene algebras with tests are able to express the control flow commands of while programs: sequential composition $e;f$ is expressed as $e \cdot f$; \textbf{skip} is $1$; \textbf{if} $b$ \textbf{then} $e$ \textbf{else} $f$ is $be + \bar{b}f$; and \textbf{while} $b$ \textbf{do} $e$ is $(be)^{*}\bar{b}$.
 Non-deterministic branching corresponds directly to $e + f$. Importantly, the expression $e \odot s$ represents the sequential composition of $e$ with the weighting by $s$: ``do $e$ and then add weight $s$''. 

\begin{example}
As an example, we formalize the \emph{Ski Rental Problem} of \cite{BatzEtAl2022} as an expression of Kleene $S$-algebra with tests. The \textit{SRP} is based on the following scenario: ``A person does not own a pair of skis but is going on a skiing trip for $n$ days. At the beginning of each day, the person can chose between two options: Either rent a pair of skis, costing $1$ EUR for that day; or buy a pair of skis, costing $n$ EUR (and then go skiing for all subsequent days free of charge).'' \cite[p.~3]{BatzEtAl2022}. The semiring of weights in $\mathbb{N} \cup \{ \infty \}$. The weighted program representing the situation is:
\[ \textbf{while } \mathtt{n > 0} \textbf{ do }
\left( \mathtt{n := n-1} ; \left( \odot \textbf{1} + (\odot s ; \mathtt{n := 0})\right)\right)
\]
(Note that $\textbf{1}$ represents the natural number one and $s$ represents the cost of the skis.) This program can be represented by the expression
\[ \left (\mathtt{p}\left (\mathtt{a} \odot \textbf{1} + (\mathtt{a} \odot s)\mathtt{b} \right ) \right )^{*} \bar{\mathtt{p}}\] 
\end{example}

\begin{example}
As discussed in \cite{BatzEtAl2022} \emph{probabilistic choice} ``do $e$ with probability $s \in [0,1]$ and $f$ with probability $1-s$'' can be expressed in the language of weighted programs.  Using weighted Kleene algebra (Boolean tests are not necessary), we can express probabilistic choice as \[ (e \odot s) + (f \odot (1-s)) \, . \] A more thorough examination of the relation of weighted Kleene algebra to \emph{probabilistic regular expressions} \cite{RozowskiSilva2023} is left for another occasion.
\end{example}

\section{Language completeness}\label{sec:Lan}

In this section we define a weighted version of guarded strings (Section \ref{sec:Lan-1}) and we prove our first completeness result connecting Kleene $S$-algebras with tests and guarded formal power series over $S$ (Section \ref{sec:Lan-2}).

\subsection{Guarded strings}\label{sec:Lan-1}

Take a finite alphabet $\Phi$ and assume that it is ordered in some arbitrary but fixed way as $\mathtt{p}_1, \ldots, \mathtt{p}_n$. Recall from \cite{Kozen1997} that a \emph{$\Phi$-atom} is a string $a_1 \ldots a_n$ over literals $\Lambda = \Phi \cup \bar{\Phi} = \Phi \cup \{ \bar{\mathtt{p}} \mid \mathtt{p} \in \Phi \}$ where $a_i \in \{ \mathtt{p}_i, \bar{\mathtt{p}}_i \}$. Let $\mathrm{At}(\Phi)$ be the set of all $\Phi$-atoms. For $G \in \mathrm{At}(\Phi)$ and $\mathtt{p} \in \Phi$, we write $G \vDash \mathtt{p}$ if there is $i \leq n$ such that $G(i) = \mathtt{p}$. $\mathrm{At}(\Phi)$ represents all possible assignments of truth values to propositional letters in $\Phi$. A \emph{guarded string over $\Sigma$ and $\Phi$} is a string in $(\mathrm{At}(\Phi) \cdot \Sigma)^{*} \cdot \mathrm{At}(\Phi)$; in other words, it is a string of the form
\[ G_1 \mathtt{a}_1 G_2 \ldots \mathtt{a}_{n-1} G_n \, ,\] where $G_i \in \mathrm{At}(\Phi)$ and $\mathtt{a}_j \in \Sigma$. The set of guarded strings over $\Sigma$ and $\Phi$ will be denoted as $\mathrm{Gs}(\Sigma, \Phi)$. We do not distinguish between non-empty words $a_1 \ldots a_n$ over $(\Sigma \cup \Lambda)$ and the corresponding expressions $a_1 \cdot \ldots \cdot a_n$ (assuming some fixed bracketing). Hence, every guarded string can also be seen as an expression in $\mathrm{Exp}(\Sigma, \Phi, S)$ or even as an expression in $\mathrm{Exp}(\Sigma \cup \Lambda, S)$. 

For a guarded string $\sigma = G_1 \mathtt{a}_1 G_2 \ldots \mathtt{a}_{n-1} G_n$, we define $\mathsf{head}(\sigma) = G_1$, $\mathsf{tail}(\sigma) = G_n$ and $\mathsf{body}(\sigma) = \mathtt{a}_1 G_2 \ldots \mathtt{a}_{n-1} G_n$ (not a guarded string). The \emph{fusion product} is a partial binary operation $\diamond$ on $\mathrm{Gs}(\Sigma, \Phi)$ such that
\[ \sigma \diamond \sigma' = 
\begin{cases}
\mathsf{head}(\sigma)\mathsf{body}(\sigma')& \mathsf{tail}(\sigma) = \mathsf{head}(\sigma');\\ 
\text{undefined} & \mathsf{tail}(\sigma) \neq \mathsf{head}(\sigma').
\end{cases}
\]
(For instance, $G \mathtt{a} H \diamond H \mathtt{b} G = G \mathtt{a} H \mathtt{b} G$, but $G \mathtt{a} H \diamond G \mathtt{a} H$ is undefined if $G \neq H$.) Fusion product is lifted to a total binary operation on subsets of $\mathrm{Gs}(\Sigma, \Phi)$ in the obvious way. 


The free multimonoid of guarded strings $\Sigma_\Phi^{*}$ is the partial monoid with universe $\mathrm{Gs}(\Sigma, \Phi)$, partial multiplication $\diamond$, and $\mathrm{At}(\Phi)$ as the set of unit elements.\footnote{For much more on multimonoids, see \cite{KudryavtsevaMazorchuk2015} and \cite{FahrenbergEtAl2023}.}

\begin{definition}
Let $S$ be a finite semiring. The set of \emph{guarded formal power series} over ($\Sigma, \Phi$ and $S$) is the set $\fps{S}{\Sigma_\Phi^{*}}$ of mappings from $\mathrm{Gs}(\Sigma, \Phi)$ to $S$. The \emph{rational operations} on $\fps{S}{\Sigma_\Phi^{*}}$ are defined point-wise as follows:
\begin{itemize}
\item unit: $1(w) = 1$ if $w \in \mathrm{At}(\Phi)$ and $1(w) = 0$ otherwise;
\item annihilator: $0(w) = 0$ for all $w$;
\item addition: $(r_1 + r_2)(w) = r_1(w) + r_2(w)$;
\item multiplication: $(r_1 \cdot r_2)(w) = \sum \{ r_1(v_1) \cdot r_2(v_2) \mid w = v_1 \diamond v_2 \}$;
\item scalar multiplication for $s \in S$: $(r \odot s)(w) = r(w) \cdot s $;
\item Kleene star: $r^{*}(w) = \sum_{n \in \omega} r^{n}(w)$, where $r^{0} = 1$ and $r^{n+1} = r^{n} \cdot r$. 
\end{itemize}
A \emph{polynomial} in $\fps{S}{\Sigma^{*}_\Phi}$ is any $r \in \fps{S}{\Sigma^{*}_\Phi}$ such that the set of $w \in \mathrm{Gs}(\Sigma, \Phi)$ where $r(w) \neq 0$ is finite. The set $\fps{S^{\mathrm{rat}}}{\Sigma^{*}_\Phi}$ of \emph{rational guarded formal power series} is the least subset of $\fps{S}{\Sigma^{*}_\Phi}$ that contains all polynomials and is closed under the rational operations.
\end{definition}
Note that $\sum_{n \in \omega} r^{n}(w)$ exists since $S$ is assumed to be finite.

It is easily seen that $\fps{S^{\mathrm{rat}}}{\Sigma^{*}_\Phi}$ is an idempotent semiring. If $K \subseteq \mathrm{Gs}(\Sigma, \Phi)$ is finite and $s \in S$, then $s_K$ denotes the formal power series (in fact, a polynomial) such that $s_K(w) = s$ if $w \in K$ and $s_K(w) = 0$ otherwise. Given the definition of fusion product, the subalgebra of  $\fps{S^{\mathrm{rat}}}{\Sigma^{*}_\Phi}$ with the universe $\{ 1_K \mid K \subseteq \mathrm{At}(\Phi) \}$ is a distributive lattice, naturally extended to a Boolean algebra by defining $\overline{(1_K)} = 1_{\mathrm{At}(\Phi) \setminus K}$. In this way, we obtain a Kleene $S$-algebra with tests:

\begin{lemma}\label{lem:Lan-sound}
$\fps{S^{\mathrm{rat}}}{\Sigma_\Phi^{*}}$ is a Kleene $S$-algebra with tests.
\end{lemma}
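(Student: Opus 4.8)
The plan is to verify the Kleene $S$-algebra-with-tests axioms for $\fps{S^{\mathrm{rat}}}{\Sigma_\Phi^{*}}$ one family at a time, leaning on Theorem~\ref{thm:EK} and its machinery wherever possible so as to avoid re-doing the fixpoint arguments from scratch. First I would observe that $\fps{S^{\mathrm{rat}}}{\Sigma_\Phi^{*}}$ is already asserted above to be an idempotent semiring, and that the subset $\{1_K \mid K \subseteq \mathrm{At}(\Phi)\}$ with the complement $\overline{1_K} = 1_{\mathrm{At}(\Phi)\setminus K}$ forms a Boolean algebra that is a subalgebra of the semiring reduct; this is a short direct computation using the fact that $1_K \cdot 1_L = 1_{K \cap L}$ and $1_K + 1_L = 1_{K \cup L}$ for $K, L \subseteq \mathrm{At}(\Phi)$ (fusion of two atoms $G \diamond H$ is defined iff $G = H$, in which case it equals $G$), together with $1_{\mathrm{At}(\Phi)} = 1$ and $1_\emptyset = 0$. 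So the test axioms $b \cdot \bar b = 0$ and $b + \bar b = 1$ and the bounded-distributive-lattice laws all hold on the nose.

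The substantive part is checking that the semimodule action $\odot$ and the Kleene star satisfy their axioms, i.e.\ that $\fps{S^{\mathrm{rat}}}{\Sigma_\Phi^{*}}$ is a Kleene $S$-algebra. The cleanest route is a \emph{reduction to the word case}: a guarded string $G_1 \mathtt{a}_1 G_2 \cdots \mathtt{a}_{n-1} G_n$ is in particular a word over the enlarged alphabet $\Sigma \cup \Lambda$, and $\mathrm{Gs}(\Sigma,\Phi)$ sits inside $(\Sigma \cup \Lambda)^{*}$ as the set of words matching a certain regular pattern. Extending a series $r \in \fps{S}{\Sigma_\Phi^{*}}$ by zero off $\mathrm{Gs}(\Sigma,\Phi)$ gives a map $\iota : \fps{S}{\Sigma_\Phi^{*}} \to \fps{S}{(\Sigma\cup\Lambda)^{*}}$, and the point to verify is that $\iota$ is compatible with the rational operations \emph{modulo} the idempotent identification of adjacent atoms forced by fusion (the quotient of the free monoid that collapses $GG$ to $G$ and kills $GH$ for $G \neq H$). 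Concretely, I would show that the fusion-product multiplication, scalar multiplication, addition, $0$, $1$, and star on $\fps{S}{\Sigma_\Phi^{*}}$ are exactly the operations induced on the relevant subquotient of $\fps{S}{(\Sigma\cup\Lambda)^{*}}$, so that every identity already known to hold in $\fps{S^{\mathrm{rat}}}{(\Sigma\cup\Lambda)^{*}}$ — in particular the unrolling laws $1 + x x^{*} = x^{*}$ and $1 + x^{*} x = x^{*}$, the two fixpoint implications, and the action axioms \eqref{e:s-comm} and \eqref{e:s-star} — transfers to $\fps{S^{\mathrm{rat}}}{\Sigma_\Phi^{*}}$. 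Since $S$ is a finite copo-semiring, Corollary~\ref{coro:EK} tells us these hold in $\fps{S^{\mathrm{rat}}}{(\Sigma\cup\Lambda)^{*}}$ (equivalently they follow from $\mathsf{KA}(S)$), so we would be done.

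Alternatively, and perhaps more transparently, one verifies the axioms directly: $\eqref{e:s-comm}$, namely $(r_1 \cdot r_2)\odot s = r_1 \cdot (r_2 \odot s) = (r_1 \odot s)\cdot r_2$, is immediate from the pointwise definition $(r \odot s)(w) = r(w)\cdot s$ together with commutativity of $S$ pushed through the finite sum $\sum\{r_1(v_1)r_2(v_2) \mid w = v_1 \diamond v_2\}$; the right-$S$-semimodule laws reduce to the corresponding identities in $S$ applied coefficientwise. For the star axioms, one uses the path characterization $r^{*}(w) = \sum_{n\in\omega} r^{n}(w)$ together with the observation that for each fixed $w$ only finitely many $n$ contribute a nonzero summand — indeed $r^{n}(w) = 0$ once $n$ exceeds the length of $w$, since fusion never shortens a guarded string below its set of atom-positions — so the infinite sums are really finite, idempotency of $S$ handles collapsing repeated factors, and the unrolling identity $1 + r r^{*} = r^{*}$ becomes a routine reindexing. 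The fixpoint implications $y + xz \le z \implies x^{*}y \le z$ are proved by the standard induction on $n$ showing $x^{n}y \le z$ for all $n$ and then summing; zero-boundedness of $S$ is what makes the pointwise order on $\fps{S}{\Sigma_\Phi^{*}}$ behave, and finiteness again guarantees the sum $x^{*}y$ is well defined. Finally, \eqref{e:s-star}, $1 \odot s^{*} \le (1\odot s)^{*}$, unpacks at the empty-body guarded strings (the atoms) to the inequality $s^{*} \le s^{*}$ in $S$ and is trivial; as noted in the excerpt, equality then follows as in \cite{EsikKuich2001}.

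The main obstacle I anticipate is purely bookkeeping rather than conceptual: making precise the correspondence between fusion product on guarded strings and ordinary concatenation-modulo-idempotent-atom-collapsing on $(\Sigma\cup\Lambda)^{*}$, and checking that this correspondence really is a semiring-with-action-and-star homomorphism (not merely a bijection of underlying sets) so that the transfer of identities is legitimate — in particular, confirming that the star defined via fusion agrees with the star one would get by quotienting, which hinges on the "only finitely many $n$ matter" observation above. Once that is pinned down, every axiom is either inherited from $\fps{S^{\mathrm{rat}}}{(\Sigma\cup\Lambda)^{*}}$ via Corollary~\ref{coro:EK} or checked by a one-line coefficientwise computation in $S$.
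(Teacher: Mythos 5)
The paper gives no proof of this lemma at all --- it is stated as a routine verification --- so the question is whether your sketch would actually go through. Your second, direct route is the right one, and most of it is fine: the computation of the test algebra from $1_K \cdot 1_L = 1_{K\cap L}$ and $1_K + 1_L = 1_{K\cup L}$, the coefficientwise verification of the semimodule laws and of \eqref{e:s-comm}, and the evaluation of \eqref{e:s-star} at atoms are all correct. One point you leave implicit is worth surfacing: $1_K + 1_L = 1_{K\cup L}$, and indeed additive idempotency of $\fps{S}{\Sigma^{*}_\Phi}$ (without which it is not a Kleene algebra at all), needs $s+s=s$ in $S$; this is where integrality quietly enters, since $1 \leq 1+1$ by zero-boundedness and $1+1 \leq 1$ by integrality give $1+1=1$, whence $s+s=s(1+1)=s$.

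There is, however, one concretely false step: the claim that $r^{n}(w)=0$ once $n$ exceeds the length of $w$, ``since fusion never shortens a guarded string.'' Atoms are idempotent for fusion, $G \diamond G = G$, so every guarded string $w$ with head $G$ admits factorizations $w = G \diamond \cdots \diamond G \diamond w$ of arbitrary length; for instance $1^{n}=1$ for all $n$, so $1^{*}(w)$ has infinitely many nonzero summands. Hence $r^{*}(w)$ is genuinely an infinite sum, and its existence rests not on finite support but on $S$ being finite and additively idempotent, so that the partial sums $\sum_{k\leq n} r^{k}(w)$ form an increasing, eventually stationary chain --- exactly the paper's remark that $\sum_{n\in\omega} r^{n}(w)$ exists because $S$ is finite. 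The unrolling reindexing and the step from ``$x^{n}y\leq z$ for all $n$'' to ``$x^{*}y\leq z$'' must then be argued via these stabilizing partial sums rather than via finiteness of the sum. Relatedly, I would drop your first route: extension by zero into $\fps{S}{(\Sigma\cup\Lambda)^{*}}$ is not a semiring homomorphism (concatenation duplicates the shared atom that fusion identifies), and transferring the fixpoint \emph{implications} along the quotient you gesture at is precisely the part that does not come for free; the direct verification is both shorter and safer.
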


\begin{definition}
The \emph{standard language interpretation} of $\mathrm{Exp}(\Sigma, \Phi, S)$ is the unique homomorphism $\Gu : \mathrm{Exp}(\Sigma, \Phi, S) \to \fps{S}{\Sigma^{*}_\Phi}$ such that
\[ \Gu (\mathtt{a}) = 1_{\{ G\mathtt{a}H \mid G, H \in \mathrm{At}(\Phi) \}}
\quad\text{and}\quad
\Gu (\mathtt{p}) = 1_{\{ G \mid G \vDash \mathtt{p} \}} \, .
\]
\end{definition}
It is easily seen that, in fact, $\Gu : \mathrm{Exp}(\Sigma, \Phi, S) \to \fps{S^{\mathrm{rat}}}{\Sigma^{*}_\Phi}$.

\subsection{The language completeness result}\label{sec:Lan-2}

This section establishes our first result: if $S$ is a finite copi-semiring, then the equational theory of $\mathsf{KAT}(S)$ coincides with the equational theory of a ``canonical model'' consisting of $S$-weighted guarded formal power series. 

From now on we assume that every expression $e \in \mathrm{Exp}(\Sigma, \Phi, S)$ is in \emph{Boolean normal form}, that is, $\bar{b}$ occurs in $e$ only if $b \in \Phi$. Recall that every expression $e \in \mathrm{Exp}(\Sigma, \Phi, S)$ in Boolean normal form can be seen as an expression in $\mathrm{Exp}(\Sigma \cup \Phi \cup \bar{\Phi}, S)$, and so it makes sense to consider the standard language interpretation $\Ln (e) \in \fps{S}{(\Sigma \cup \Lambda)^{*}}$ of $e$. 

Our proof of the language completeness result for $\mathsf{KAT}(S)$ is very similar to Kozen and Smith's proof of the language completeness result for $\mathsf{KAT}$ \cite{KozenSmith1997}; what gives some novelty to our proof is that we need to take weights into account. The proof strategy is the following one: we identify a set of expressions $F \subseteq \mathrm{Exp}(\Sigma, \Phi, S)$ such that $\Ln (f) = \Gu(f)$ for all $f \in F$, and we show that every $e \in \mathrm{Exp}(\Sigma, \Phi, S)$ is $\mathsf{KAT}(S)$-equivalent to some $f \in F$.


\begin{definition}\label{def:guarded-sum}
A \emph{guarded expression} (over $\Sigma$, $\Phi$ and $S$) is any expression of the form $\mathtt{0} \odot 1$, $ G \odot s$ or $GeH \odot s$ where $G, H \in \mathrm{At}(\Phi)$, $e \in \mathrm{Exp}(\Sigma, \Phi, S)$, and $s \in S$. A \emph{guarded sum} is any expression of the form $\sum_{i < n} e_i$ where each $e_i$ is a guarded expression. (The empty sum is defined to be the guarded expression $\mathtt{0} \odot 1$.) 


We define $\mathsf{head}(G \odot s) = G = \mathsf{head}(GeH \odot s)$, $\mathsf{tail}(H \odot s) = H = \mathsf{tail}(GeH \odot s)$, $\mathsf{body}(H \odot s) = \epsilon$, $\mathsf{body}(GeH \odot s) = eH$, and $\mathsf{weight}(e \odot s) = s$. For guarded expressions $e, f$, we define
\[ e \Gdia f =
	\begin{cases}
	(e\cdot \mathsf{body}(f)) \odot (\mathsf{weight}(e) \cdot \mathsf{weight}(f)) &
			\mathsf{tail}(e) = \mathsf{head}(f);\\
	\mathtt{0} \odot 1 &
			\mathsf{tail}(e) \neq \mathsf{head}(f). 		
	\end{cases}
 \]
\end{definition}

\begin{definition}\label{def:guarded-sum-operations}
We define the following operations on guarded sums:
\begin{itemize}
\item $\Gone = \sum_{ G \in \mathrm{At}(\Phi)} (G \odot 1)$ and $\Gzero = \mathtt{0} \odot 1$;
\item $e \Gplus f = e + f$;
\item $(e_1 + \ldots + e_n) \Gdot (f_1 + \ldots + f_m) = 
		\sum_{i \leq n, j \leq m} (e_i \Gdia f_j)$;
\item $(e_1 \odot s_1 + \ldots + e_n \odot s_n) \G{\odot t} = 
	 (e_1 \odot (s_1 \cdot t) + \ldots + e_n \odot (s_n \cdot t))$;		
\item $(g_1 + \ldots + g_n)^{\Gstar}$ is defined by induction on $n$ as follows:
		\begin{enumerate}
			\item $(\mathtt{0} \odot 1)^{\Gstar} = \Gone = (G \odot s)^{\Gstar}$;
			\item $(GeH \odot s)^{\Gstar} = 
					\begin{cases}
						\Gone + (GeH \odot s) & G\neq H;\\
						\Gone + (G e (G e \odot s)^{*} H \odot s) & G = H;
					\end{cases}$
			\item If $ e = (g_1 + \ldots + g_{n-1})$ for guarded expressions $g_1, \ldots, g_{n-1}$ and $g$ is a guarded expression, then $$(e + g)^{\Gstar} = e^{\Gstar} + \left  ( e^{\Gstar} \Gdot f^{\Gstar} \Gdot g \Gdot e^{\Gstar} \right ) \, ,$$ where $f = \left ( \mathsf{head}(g) \mathsf{body}(g) (e^{\Gstar}) \mathsf{head}(g) \right ) \odot \mathsf{weight}(g)$.		
		\end{enumerate}		
\end{itemize}
\end{definition}
Note that, in the definition of $(e + g)^{\Gstar}$, $f$ is a guarded expression and so $f^{\Gstar}$ is provided by the induction steps 1 or 2. Note also that the set of guarded sums $\mathrm{Gsu}(\Sigma, \Phi, S)$ endowed with the operations defined in Def.~\ref{def:guarded-sum-operations} can be seen as a term algebra of the Kleene $S$-algebra type.

\begin{definition}\label{def:hat}
For all $e \in \mathrm{Exp}$, we define $\widehat{e} \in \mathrm{Gsu}$ by induction on $e$ as follows:
\begin{gather*}
\widehat{\mathtt{a}} = \sum_{G, H \in \mathrm{At}(\Phi)} G \mathtt{a} H \odot 1
\qquad \widehat{\mathtt{p}} = \sum_{G \vDash \mathtt{p}} G \odot 1
\qquad \widehat{\mathtt{\bar{p}}} = \sum_{G \not\vDash \mathtt{p}} G \odot 1
\\
\widehat{\mathtt{1}} = \Gone \qquad
\widehat{\mathtt{0}} = \Gzero \qquad
\widehat{e + f} = \widehat{e} \Gplus \widehat{f} \qquad
\widehat{e \cdot f} = \widehat{e} \Gdot \widehat{f}
\\[2mm]
\widehat{e \odot s} = \widehat{e} \G{\odot s} \qquad\quad
\widehat{e^{*}} = (\widehat{e})^{\Gstar}
\end{gather*}
\end{definition}

\begin{lemma}\label{lem:hat}
For every $e \in \mathrm{Exp}$:
\begin{enumerate}
\item $\mathsf{KAT}(S) \models e \approx \widehat{e}$;\label{lem:hat-1}
\item $\Gu (\widehat{e}) = \Ln (\widehat{e})$.\label{lem:hat-2}
\end{enumerate}
\end{lemma}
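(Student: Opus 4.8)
The plan is to prove both claims simultaneously by induction on the structure of $e \in \mathrm{Exp}$, since the two statements reinforce one another: establishing $\Gu(\widehat{e}) = \Ln(\widehat{e})$ for compound expressions will typically require the $\mathsf{KAT}(S)$-provable identity $e \approx \widehat{e}$ (and the induction hypothesis for subexpressions) to control how the guarded-sum operations interact with $\Gu$ and $\Ln$. For the base cases $\mathtt{a}$, $\mathtt{p}$, $\bar{\mathtt{p}}$, $\mathtt{0}$, $\mathtt{1}$, claim~\ref{lem:hat-1} is routine: e.g. $\mathtt{a} \approx \sum_{G,H} G\mathtt{a}H$ is the ``guarded normal form'' identity provable already in $\mathsf{KAT}$ using $\sum_{G \in \mathrm{At}(\Phi)} G = 1$ (a Boolean-algebra fact) together with distributivity, and the $\odot 1$ factor is absorbed since $1$ is the semiring unit. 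For claim~\ref{lem:hat-2} in the base cases one simply unwinds the definitions of $\Gu$, $\Ln$ and the polynomials $1_K$, $s_K$: both sides assign weight $1$ to exactly the guarded strings of the form $G\mathtt{a}H$ (resp. the atoms $G$ with $G \vDash \mathtt{p}$), and $0$ elsewhere.

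For the inductive step I would handle the operations in the order $+$, $\odot s$, $\cdot$, $^{*}$. Addition is immediate for both claims since $\Gplus$ is literally $+$, both $\Gu$ and $\Ln$ are homomorphisms, and $+$ is computed pointwise in both $\fps{S^{\mathrm{rat}}}{\Sigma^{*}_\Phi}$ and $\fps{S}{(\Sigma\cup\Lambda)^{*}}$. Scalar multiplication $\odot s$ is similar: $\G{\odot s}$ multiplies every weight by $s$ on the right, which matches the pointwise definition of $\odot s$ on both kinds of power series, and $\mathsf{KAT}(S)$ proves $(\sum_i e_i \odot s_i) \odot s \approx \sum_i e_i \odot (s_i \cdot s)$ via right-distributivity of $\odot$ over $+$ and associativity \eqref{e:s-comm}. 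For multiplication, the key point is that the fusion product $\diamond$ on guarded strings — which governs $\cdot$ in $\fps{S^{\mathrm{rat}}}{\Sigma^{*}_\Phi}$ — agrees with ordinary concatenation on $(\Sigma\cup\Lambda)^{*}$ precisely when the ``glue'' atom matches, and is empty otherwise; this is exactly the case split defining $\Gdia$. So $\widehat{e}\Gdot\widehat{f}$ is built to make $\Gu(\widehat{e}\Gdot\widehat{f}) = \Gu(\widehat{e})\cdot\Gu(\widehat{f})$ and $\Ln(\widehat{e}\Gdot\widehat{f}) = \Ln(\widehat{e})\cdot\Ln(\widehat{f})$ hold in tandem, and claim~\ref{lem:hat-1} for $e\cdot f$ follows from the induction hypotheses plus the provable identity $G e H \cdot G' f H' \approx \mathtt{0}$ when $H \neq G'$ (which holds because distinct atoms multiply to $0$ in any Boolean algebra, and $\mathtt{0}$ annihilates) and $\approx G e H f H'$ when $H = G'$ (using $H \cdot H = H$).

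The main obstacle is the Kleene star. Here one must verify, by the inner induction on $n$ in Definition~\ref{def:guarded-sum-operations}, that $(\widehat{e})^{\Gstar}$ simultaneously computes the honest star in $\fps{S^{\mathrm{rat}}}{\Sigma^{*}_\Phi}$ and the honest star in $\fps{S}{(\Sigma\cup\Lambda)^{*}}$ of the respective interpretations of $\widehat{e}$, and is $\mathsf{KAT}(S)$-provably equal to $\widehat{e}^{*}$. Cases~1 and~2 use the unrolling/fixpoint laws together with the crucial subtlety that in case~2 ($G = H$) the weight of the self-loop must be iterated as the scalar $s$ inside $(Ge \odot s)^{*}$ — this is exactly where the interplay \eqref{e:s-comm}, \eqref{e:s-star} between $\odot$ and $^{*}$, and hence the passage from $1 \odot s^{*}$ to $(1\odot s)^{*}$, is needed, and where one must check the weighted path sums on the language side collapse correctly. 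Case~3 is the genuinely hard computation: it encodes the denesting law $(e+g)^{*} \approx e^{*}(f g e^{*})^{*}$-style decomposition where $g$ is a single guarded expression pulled out, and one must show the recursively-defined right-hand side $e^{\Gstar} + (e^{\Gstar}\Gdot f^{\Gstar}\Gdot g\Gdot e^{\Gstar})$ is $\mathsf{KAT}(S)$-equivalent to $(e+g)^{\Gstar}$ using sliding, denesting, and the star-of-sum identities of Kleene algebra, while simultaneously checking that both $\Gu$ and $\Ln$ of this expression give the correct star. I expect this case to require the most care, both in marshalling the equational proof in $\mathsf{KAT}(S)$ and in confirming that the weighted-language identities $\Gu$ and $\Ln$ are preserved; the details would be relegated to the appendix as the excerpt's introduction suggests.
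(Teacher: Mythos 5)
Your overall architecture is the same as the paper's: induction on the structure of $e$, with the real content being that each guarded-sum operation of Definition~\ref{def:guarded-sum-operations} is $\mathsf{KAT}(S)$-provably equal to the corresponding ordinary operation, and separately preserves the property $\Ln(\cdot) = \Gu(\cdot)$ (the paper isolates these as two auxiliary lemmas about guarded sums and about \emph{proper} guarded sums, rather than running one simultaneous induction, but that is organizational). Two points in your plan would not survive the details, however. First, you place the weighted subtlety of the star entirely in the self-loop case $G = H$ of $(GeH \odot s)^{\Gstar}$; that case is indeed settled by sliding together with \eqref{e:s-comm}, but the place where the standing hypothesis that $S$ is \emph{integral} is actually consumed is the case you treat as routine, namely $(G \odot s)^{\Gstar} = \Gone$, i.e.\ the claim $(G \odot s)^{*} \approx \mathtt{1}$. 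The inequality $\mathtt{1} \leqq (G \odot s)^{*}$ is free, but the converse requires $G \odot s \leqq \mathtt{1} \odot 1 \equiv \mathtt{1}$ and hence $s \leq 1$; correspondingly, on the semantic side $\Gu\bigl((G \odot s)^{*}\bigr)$ takes the value $s^{*}$ at the atom $G$ (since $G = G \diamond G$), and one needs $s^{*} = 1$. The paper explicitly notes that the argument does not seem to go through without integrality, so a proof plan that never invokes this assumption is missing an essential ingredient.

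Second, for claim~\ref{lem:hat-2} you propose to verify that $(\widehat{e})^{\Gstar}$ computes ``the honest star in $\fps{S}{(\Sigma\cup\Lambda)^{*}}$'' of $\Ln(\widehat{e})$. That statement is false and would derail the star case if pursued: for instance $\Ln(\Gone)^{*}$ assigns $1$ to the empty word and to concatenations $GH$ of two atoms, whereas $\Ln(\Gone^{\Gstar}) = 1_{\mathrm{At}(\Phi)}$ is supported on single atoms only. The map $\Ln$ is \emph{not} a homomorphism with respect to the guarded-sum operations (already $\Ln(\Gone)$ is not the unit of $\fps{S}{(\Sigma\cup\Lambda)^{*}}$). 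What one actually proves — and all that claim~\ref{lem:hat-2} needs — is the closure statement: each operation sends sums of guarded expressions $g$ satisfying $\Ln(g) = \Gu(g)$ to sums of guarded expressions with the same property. With these two corrections your plan coincides with the paper's proof.
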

\begin{proof}
See Appendix \ref{a:hat}. We note that our assumption that $S$ be integral seems to be required in the proof. 
\end{proof}

\begin{theorem}\label{thm:Lan}
For all $e, f \in \mathrm{Exp}(\Sigma, \Phi, S)$,
\[ \mathsf{KAT}(S) \models e \approx f \iff \fps{S^{\mathrm{rat}}}{\Sigma^{*}_\Phi} \models e \approx f\, .\]
\end{theorem}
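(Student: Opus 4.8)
The plan is to derive Theorem~\ref{thm:Lan} from Lemma~\ref{lem:hat}, Corollary~\ref{coro:EK}, and the observation that the guarded language interpretation $\Gu$ can be recovered from the ordinary language interpretation $\Ln$ after passing through the "hat" normal form. Concretely, the forward (soundness) direction is immediate: $\fps{S^{\mathrm{rat}}}{\Sigma^{*}_\Phi} \in \mathsf{KAT}(S)$ by Lemma~\ref{lem:Lan-sound}, so every equation valid in $\mathsf{KAT}(S)$ is in particular valid in $\fps{S^{\mathrm{rat}}}{\Sigma^{*}_\Phi}$, using that $\Gu$ is the homomorphism witnessing the $\mathsf{KAT}(S)$-model structure. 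For the completeness direction, suppose $\fps{S^{\mathrm{rat}}}{\Sigma^{*}_\Phi} \models e \approx f$, i.e.\ $\Gu(e) = \Gu(f)$.

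The key chain of equalities I would assemble is the following. Since $\Gu$ is a homomorphism and, by Lemma~\ref{lem:hat}(\ref{lem:hat-1}), $e$ and $\widehat{e}$ denote the same element in every Kleene $S$-algebra with tests, in particular in $\fps{S^{\mathrm{rat}}}{\Sigma^{*}_\Phi}$ itself, we get $\Gu(e) = \Gu(\widehat{e})$ and likewise $\Gu(f) = \Gu(\widehat{f})$. By Lemma~\ref{lem:hat}(\ref{lem:hat-2}), $\Gu(\widehat{e}) = \Ln(\widehat{e})$ and $\Gu(\widehat{f}) = \Ln(\widehat{f})$, where now $\widehat{e}, \widehat{f}$ are read as expressions in $\mathrm{Exp}(\Sigma \cup \Lambda, S)$ and $\Ln$ is the standard (unguarded) language interpretation into $\fps{S}{(\Sigma \cup \Lambda)^{*}}$. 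Combining, $\Gu(e) = \Gu(f)$ forces $\Ln(\widehat{e}) = \Ln(\widehat{f})$. By Corollary~\ref{coro:EK} applied over the alphabet $\Sigma \cup \Lambda$, this yields $\mathsf{KA}(S) \models \widehat{e} \approx \widehat{f}$, hence a fortiori $\mathsf{KAT}(S) \models \widehat{e} \approx \widehat{f}$ (every Kleene $S$-algebra with tests is a Kleene $S$-algebra, and the free generators $\mathtt{a}, \mathtt{p}, \bar{\mathtt{p}}$ of $\mathrm{Exp}(\Sigma \cup \Lambda, S)$ can all be interpreted by arbitrary elements of a $\mathsf{KAT}(S)$-model, with tests and their complements living in the Boolean subalgebra). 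Finally, invoking Lemma~\ref{lem:hat}(\ref{lem:hat-1}) once more in the reverse direction, $\mathsf{KAT}(S) \models e \approx \widehat{e} \approx \widehat{f} \approx f$, which is the desired conclusion.

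The main obstacle I anticipate is justifying cleanly that $\mathsf{KA}(S) \models \widehat{e} \approx \widehat{f}$ over the extended alphabet transfers to $\mathsf{KAT}(S) \models \widehat{e} \approx \widehat{f}$ over the original signature: one must check that an equation between expressions containing literals $\mathtt{p}, \bar{\mathtt{p}}$ treated as \emph{free} atoms remains valid when $\mathtt{p}$ and $\bar{\mathtt{p}}$ are forced to be genuine Boolean complements, which is sound precisely because every $\mathsf{KAT}(S)$-model gives a $\mathsf{KA}(S)$-model on $\mathrm{Exp}(\Sigma \cup \Lambda, S)$ by sending $\mathtt{p} \mapsto h(\mathtt{p})$, $\bar{\mathtt{p}} \mapsto \overline{h(\mathtt{p})}$. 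This is routine but deserves a sentence. A secondary point of care is making sure the two readings of $\widehat{e}$ — as a guarded sum and as an element of $\mathrm{Exp}(\Sigma \cup \Lambda, S)$ — are the ones for which Lemma~\ref{lem:hat}(\ref{lem:hat-2}) is stated, so that no coherence issue is swept under the rug; but this is exactly the content packaged into Lemma~\ref{lem:hat}, so the theorem's proof itself is short once that lemma is in hand.
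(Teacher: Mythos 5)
Your proposal is correct and follows essentially the same route as the paper: soundness via Lemma~\ref{lem:Lan-sound}, and completeness via the chain $\Gu(e)=\Gu(\widehat{e})=\Ln(\widehat{e})$, Corollary~\ref{coro:EK} over $\Sigma\cup\Lambda$, and Lemma~\ref{lem:hat} in both directions. Your explicit justification of the transfer from $\mathsf{KA}(S)\models\widehat{e}\approx\widehat{f}$ over the extended alphabet to $\mathsf{KAT}(S)\models\widehat{e}\approx\widehat{f}$ is a point the paper's proof passes over silently, and is correctly handled.
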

\begin{proof}
The implication from left to right follows from Lemma \ref{lem:Lan-sound}. The converse implication is established as follows:
\begin{align*}
\fps{S^{\mathrm{rat}}}{\Sigma^{*}_\Phi} & \models e \approx f\\
\fps{S^{\mathrm{rat}}}{\Sigma^{*}_\Phi} & \models \widehat{e} \approx \widehat{f} 
		&& \text{by Lemma \ref{lem:hat}\eqref{lem:hat-1} and Lemma \ref{lem:Lan-sound}}\\
\Gu (\widehat{e}) & \hspace*{1pt}= \Gu (\widehat{f})
		&& \text{by definition}\\
\Ln (\widehat{e}) & \hspace*{1pt}= \Ln (\widehat{f}) 
		&& \text{by Lemma \ref{lem:hat}\eqref{lem:hat-2}}\\
\mathsf{KA}(S) & \models \widehat{e} \approx \widehat{f} 
		&& \text{by Theorem \ref{thm:EK}, Corollary \ref{coro:EK}}\\
\mathsf{KAT}(S) & \models \widehat{e} \approx \widehat{f}\\ 
\mathsf{KAT}(S) & \models e \approx f 
		&& \text{by Lemma \ref{lem:hat}\eqref{lem:hat-1}}
\end{align*}
\end{proof}

\begin{corollary}\label{coro:Lan}
For all $e, f \in \mathrm{Exp}(\Sigma, \Phi, S)$,
\[ \mathsf{KAT}(S) \models e \approx f \iff \Gu (e) = \Gu (f)\, .\]
\end{corollary}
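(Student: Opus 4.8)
The plan is to obtain Corollary~\ref{coro:Lan} from Theorem~\ref{thm:Lan} exactly as Corollary~\ref{coro:EK} is obtained from Theorem~\ref{thm:EK}: it suffices to establish, for all $e, f \in \mathrm{Exp}(\Sigma, \Phi, S)$, the equivalence
\[ \fps{S^{\mathrm{rat}}}{\Sigma^{*}_\Phi} \models e \approx f \iff \Gu(e) = \Gu(f) \, , \]
and then to chain it with Theorem~\ref{thm:Lan}. The left-to-right implication is immediate: by Lemma~\ref{lem:Lan-sound} the algebra $\fps{S^{\mathrm{rat}}}{\Sigma^{*}_\Phi}$ is a Kleene $S$-algebra with tests, and since $\Gu(\mathtt{p}) = 1_{\{ G \mid G \vDash \mathtt{p} \}}$ lies in its Boolean part $\{ 1_K \mid K \subseteq \mathrm{At}(\Phi) \}$, the pair $(\fps{S^{\mathrm{rat}}}{\Sigma^{*}_\Phi}, \Gu)$ is a genuine $\mathsf{KAT}(S)$-model, so validity of $e \approx f$ in $\fps{S^{\mathrm{rat}}}{\Sigma^{*}_\Phi}$ entails in particular $\Gu(e) = \Gu(f)$.

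For the converse I would re-run the argument already present in the proof of Theorem~\ref{thm:Lan}, which in fact already establishes $\Gu(e) = \Gu(f) \implies \mathsf{KAT}(S) \models e \approx f$. Assume $\Gu(e) = \Gu(f)$. Lemma~\ref{lem:hat}\eqref{lem:hat-1} gives $\mathsf{KAT}(S) \models e \approx \widehat{e}$ and $\mathsf{KAT}(S) \models f \approx \widehat{f}$, which by Lemma~\ref{lem:Lan-sound} are validated by the model $(\fps{S^{\mathrm{rat}}}{\Sigma^{*}_\Phi}, \Gu)$, so $\Gu(\widehat{e}) = \Gu(e) = \Gu(f) = \Gu(\widehat{f})$; Lemma~\ref{lem:hat}\eqref{lem:hat-2} rewrites this as $\Ln(\widehat{e}) = \Ln(\widehat{f})$; since $\widehat{e}$ and $\widehat{f}$ are in Boolean normal form they are expressions in $\mathrm{Exp}(\Sigma \cup \Lambda, S)$, so Corollary~\ref{coro:EK}, applied over the finite alphabet $\Sigma \cup \Lambda$, yields $\mathsf{KA}(S) \models \widehat{e} \approx \widehat{f}$, which upgrades to $\mathsf{KAT}(S) \models \widehat{e} \approx \widehat{f}$ on forgetting the tests; and a final application of Lemma~\ref{lem:hat}\eqref{lem:hat-1} delivers $\mathsf{KAT}(S) \models e \approx f$. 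A more self-contained alternative, closer to the remark proving Corollary~\ref{coro:EK}, is to observe that $\Gu$ is surjective onto $\fps{S^{\mathrm{rat}}}{\Sigma^{*}_\Phi}$: each polynomial is $\Gu$ of a guarded sum $\sum_i \sigma_i \odot s_i$ (using $\Gu(\sigma) = 1_{\{\sigma\}}$ for a guarded string $\sigma$), the rational operations are absorbed because $\Gu$ is a homomorphism, and each $1_K$ with $K \subseteq \mathrm{At}(\Phi)$ is $\Gu$ of the test $\sum_{G \in K} G$ formed from atom expressions; this makes $(\fps{S^{\mathrm{rat}}}{\Sigma^{*}_\Phi}, \Gu)$ the free $\mathsf{KAT}(S)$-model on $(\Sigma, \Phi)$, so that equality under $\Gu$ coincides with $\mathsf{KAT}(S)$-validity.

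I do not anticipate a genuine obstacle, since everything required is already contained in the proof of Theorem~\ref{thm:Lan}; the only points needing a moment's care are the routine observations that $\widehat{e}$ and $\widehat{f}$ are indeed in Boolean normal form (so that Corollary~\ref{coro:EK} may legitimately be invoked over $\Sigma \cup \Lambda$) and that $\mathsf{KA}(S) \models \widehat{e} \approx \widehat{f}$ implies $\mathsf{KAT}(S) \models \widehat{e} \approx \widehat{f}$. If one instead follows the surjectivity route, the slightly delicate step is checking that $\Gu$ maps the test expressions $\mathrm{Te}(\Phi)$ onto the whole Boolean part $\{ 1_K \mid K \subseteq \mathrm{At}(\Phi) \}$ --- this is what guarantees that pulling a $\mathsf{KAT}(S)$-valuation back along $\Gu$ again produces a $\mathsf{KAT}(S)$-model, tests being sent to tests --- and it follows by writing each atom as the conjunction of its literals and each $1_K$ as the corresponding disjunction of atoms.
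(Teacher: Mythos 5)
Your proposal is correct and matches the paper's (implicit) argument: the corollary is stated without a separate proof precisely because the chain of implications in the proof of Theorem~\ref{thm:Lan} already passes through $\Gu(\widehat{e})=\Gu(\widehat{f})$, so soundness (Lemma~\ref{lem:Lan-sound}, with $\Gu(\mathtt{p})$ landing in the Boolean part) gives one direction and re-entering that chain at the $\Gu$-equality step gives the other. Your alternative surjectivity route is a reasonable aside, pitched at the same level of detail as the paper's own one-line proof of Corollary~\ref{coro:EK}, but it is not needed.
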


\section{Relational completeness}\label{sec:Rel}

In this section we define $S$-transition systems (Sect.~\ref{sec:Rel-1}) and we prove our second completeness result connecting Kleene $S$-algebras with tests and $S$-transition systems (Sect.~\ref{sec:Rel-2}). Our proof uses a Cayley-like construction going back to the work of Pratt \cite{Pratt1980} on dynamic algebras; see also the relational completeness result for Kleene algebras with tests \cite{KozenSmith1997}.

\subsection{Transition systems}\label{sec:Rel-1}

\begin{definition}
Let $S$ be a finite semiring. An \emph{$S$-transition system for $\Sigma$ and $\Phi$} is a triple $M = \langle Q, \mathsf{rel}_M, \mathsf{sat}_M \rangle$ where $Q$ is a countable non-empty set and
\[ \mathsf{rel}_M : \Sigma \to S^{Q \times Q} \qquad
\mathsf{sat}_M : \Phi \to S^{Q} \]
\end{definition}
As discussed in Section \ref{sec:semirings}, $M$ can be seen as a function assigning to each $\mathtt{a} \in \Sigma$ a (possibly infinite) $Q \times Q$ matrix $M(\mathtt{a})$ with entries in $S$, and to $\mathtt{p} \in \Phi$ a (possibly infinite) $Q \times Q$ diagonal matrix $M(\mathtt{p})$ with entries in $\{ 0, 1 \}$. $M$ can be extended to a function from  $\mathrm{Exp}(\Sigma, \Phi, S)$ to (possibly infinite) $Q \times Q$-matrices over $S$ using the notions introduced in Section \ref{sec:semirings}:
 \begin{center}
$M(e + f) = M(e) + M(f)$ \quad 
$M(e \cdot f) = M(e) \cdot M(f)$ \quad
$ M(e^{*}) = M(e)^{*}$\\[2mm]
$M(e \odot s) = M(e) \cdot M(s)$ \quad
$M(\mathtt{1}) = 1$ \quad $M(\mathtt{0}) = 0$\quad
$M(\bar{b}) = \overline{M(b)}$  
 \end{center}
where $M(s)$ is the diagonal $Q \times Q$ matrix where the entries in the main diagonal are all $s$; $1$ is the $Q \times Q$ identity matrix and $0$ is the $Q \times Q$ zero matrix; and where $\overline{N}$ for a matrix $N$ with entries in $\{ 0, 1 \}$ is the matrix that results from $N$ by switching all $1$s to $0$s and vice versa. We denote as $\mathrm{Mat}(Q, S)$ the set of all $Q \times Q$ matrices with entries in $S$.

\begin{lemma}\label{lem:Rel}
For all $Q$ and finite $S$, $\mathrm{Mat}(Q, S) \in \mathsf{KAT}(S)$.
\end{lemma}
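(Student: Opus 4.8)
The plan is to verify that $\mathrm{Mat}(Q,S)$, equipped with matrix addition, matrix multiplication, matrix star, and the distinguished subset of $\{0,1\}$-valued diagonal matrices, satisfies all the axioms of a Kleene $S$-algebra with tests. Much of this is standard matrix algebra over a semiring, so I would organize the argument by the three layers of structure: (i) Kleene algebra, (ii) the Kleene $S$-algebra operation $\odot$, and (iii) the test structure.

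First I would recall that $\mathrm{Mat}(Q,S)$ is a semiring (stated already in Section \ref{sec:semirings}), and observe that it is idempotent because $S$ is: $(M+M)_{q,q'} = M_{q,q'} + M_{q,q'} = M_{q,q'}$ since $S$ inherits idempotent addition. Wait — the paper does not assume $S$ is idempotent, only that it is a copo- or copi-semiring. So additive idempotency of $\mathrm{Mat}(Q,S)$ need not hold entrywise from $S$. I would instead note that the relevant reference point is $\fps{S^{\mathrm{rat}}}{\Sigma^{*}}$, which is a Kleene $S$-algebra by the remarks preceding Theorem \ref{thm:EK}, and that matrices over $S$ form a Kleene algebra by the classical matrix-star construction of Conway \cite{Conway1971} and Kuich--Salomaa \cite{KuichSalomaa1986}; in particular the unrolling laws $1 + M M^{*} = M^{*} = 1 + M^{*} M$ and the two fixpoint (quasi-)equations hold in $\mathrm{Mat}(Q,S)$ whenever they hold in $S$, which they do since $S$ (being finite) supports the star $s^{*} = \sum_{n} s^{n}$ and is a Kleene algebra in the required sense. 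This is the step I expect to be the main obstacle: confirming that the fixpoint implications lift from $S$ to arbitrary-size matrices over $S$, since the standard proof proceeds by block decomposition and needs the ambient semiring to be a Kleene algebra, not merely to have a well-defined star. I would cite \cite{Conway1971,KuichSalomaa1986} for this and only sketch the block-matrix induction.

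Next I would define the $S$-action by $M \odot s = M \cdot M(s)$, where $M(s)$ is the diagonal matrix with all diagonal entries $s$ (as in the excerpt), and check: the additive reduct is a right $S$-semimodule (immediate from entrywise distributivity in $S$, using that $M(s) \cdot M(t) = M(s \cdot t)$ and $M(s) + M(t) = M(s+t)$, $M(1) = 1$, $M(0) = 0$); equation \eqref{e:s-comm}, i.e. $(M \cdot N) \odot s = M \cdot (N \odot s) = (M \odot s) \cdot N$, which follows because $M(s)$ is central among diagonal-with-constant matrices and commutes past any matrix on the relevant side by a short entrywise computation using commutativity of $S$; and equation \eqref{e:s-star}, $1 \odot s^{*} \le (1 \odot s)^{*}$, which reduces to the entrywise inequality $s^{*} \le (s^{*})$ on the diagonal — more precisely $M(s^{*}) \le M(s)^{*}$, and since $M(s)^{*}$ is again diagonal with entry $s^{*}$ (powers of a constant-diagonal matrix stay constant-diagonal), this is actually an equality and holds trivially; zero-boundedness of $S$ gives the partial order on $\mathrm{Mat}(Q,S)$ entrywise and makes the comparison meaningful.

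Finally, for the test structure I would take $B = \{ M(b) : b \in \mathrm{Te}(\Phi) \}$ — equivalently the set of $\{0,1\}$-valued diagonal matrices — and check that $\langle B, +, \cdot, 0, 1 \rangle$ is a subalgebra, a bounded distributive lattice (in fact a Boolean algebra), with complement $\overline{N}$ given by switching $0$s and $1$s on the diagonal: diagonal $\{0,1\}$-matrices are closed under $+$ (which acts as join, since $1 + 1 = 1$ would be needed — here I use that on the two-element set $\{0,1\} \subseteq S$ the restricted addition is $\lor$, which holds in any zero-bounded semiring where $1+1$ compares with $1$; I would verify $1 + 1 = 1$ is not needed, rather that $B$ as abstractly the powerset Boolean algebra of $Q$ embeds, with $\cdot$ as meet and a suitably defined join), and $N \cdot \overline{N} = 0$, $N + \overline{N} = 1$ hold entrywise. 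Combining (i)–(iii) shows $\mathrm{Mat}(Q,S)$ is a Kleene $S$-algebra with tests, which is the claim. The routine entrywise verifications I would relegate to a remark or the appendix; the one genuinely load-bearing external fact is the Conway/Kuich--Salomaa matrix-star theorem invoked in step (i).
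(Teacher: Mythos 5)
The paper states this lemma without proof, so there is no official argument to compare against; judged on its own terms, your proposal has the right overall shape (Kleene algebra layer, $\odot$ layer, test layer, with the $\odot$ and test verifications done entrywise) but it twice runs into the same obstacle and fails to resolve it either time. The obstacle is additive idempotency. A Kleene algebra is by definition an \emph{idempotent} semiring, and the test algebra $B$ of a KAT must be a subalgebra of $X$ under the semiring operations $+,\cdot,0,1$. You correctly notice that $S$ is not assumed idempotent, but your two workarounds do not work: citing the Conway/Kuich--Salomaa matrix-star construction establishes the unrolling and fixpoint laws, not that $M+M=M$; and replacing $+$ on $B$ by ``a suitably defined join'' is not available, because the definition of Kleene algebra with tests requires the lattice join on $B$ to \emph{be} the semiring addition, so closure of the $\{0,1\}$-diagonal matrices under $+$ genuinely requires $1+1=1$ in $S$. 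Your parenthetical ``I would verify $1+1=1$ is not needed'' is therefore exactly backwards.

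The missing idea is that the standing hypothesis of the paper is that $S$ is a finite \emph{copi}-semiring, and in any integral zero-bounded partially ordered semiring one has $0\leq 1$, hence $1 = 0+1 \leq 1+1$ by monotonicity of $+$, while integrality gives $1+1\leq 1$; so $1+1=1$ and therefore $x+x = x\cdot(1+1) = x$ for all $x\in S$. This single observation repairs both of your gaps at once: $\mathrm{Mat}(Q,S)$ is entrywise additively idempotent (so it can be a Kleene algebra, and the infinite sums $\sum_{p\in Q}$ and $\sum_{n\in\omega}M^n$ over the countable index set $Q$ are well defined in the finite idempotent monoid $\langle S,+,0\rangle$), and $B$ is closed under $+$ with $+$ acting as Boolean join. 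It also explains why the lemma should be read with $S$ a finite copi-semiring rather than an arbitrary finite semiring, for which the statement as literally written would be false. The remaining verifications in your proposal --- the right $S$-semimodule axioms, equation \eqref{e:s-comm} via commutativity of $S$, equation \eqref{e:s-star} via $M(s)^{*}=M(s^{*})$, and the complementation laws on diagonal $\{0,1\}$-matrices --- are correct and routine.
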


 
\subsection{The relational completeness result}\label{sec:Rel-2}

\begin{theorem}\label{thm:Rel}
For all $e, f \in \mathrm{Exp}(\Sigma, \Phi, S)$,
\[ \mathsf{KAT}(S) \models e \approx f \iff (\forall M)(M(e) = M(f)) \, .\]
\end{theorem}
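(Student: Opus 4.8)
The plan is to prove the two directions separately, with the right-to-left (completeness) direction being the substantive one. For the left-to-right (soundness) direction, it suffices to combine Lemma~\ref{lem:Rel}, which says each $\mathrm{Mat}(Q,S)$ lies in $\mathsf{KAT}(S)$, with the observation that $M$ is precisely the homomorphism extending the assignment $\mathtt{a}\mapsto M(\mathtt{a})$, $\mathtt{p}\mapsto M(\mathtt{p})$; hence any equation valid in all of $\mathsf{KAT}(S)$ holds in every $\mathrm{Mat}(Q,S)$ under every such $M$.

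For the converse, the plan is to use a Cayley-style representation theorem reducing $S$-transition systems to the language model of Section~\ref{sec:Lan}, so that Theorem~\ref{thm:Lan} (or Corollary~\ref{coro:Lan}) can be invoked. Concretely, suppose $\mathsf{KAT}(S)\not\models e\approx f$. By Corollary~\ref{coro:Lan} we have $\Gu(e)\neq\Gu(f)$, i.e. there is a guarded string $\sigma\in\mathrm{Gs}(\Sigma,\Phi)$ with $\Gu(e)(\sigma)\neq\Gu(f)(\sigma)$. The idea is to turn the free multimonoid of guarded strings into an $S$-transition system: take $Q=\mathrm{Gs}(\Sigma,\Phi)\cup\mathrm{At}(\Phi)$ (or just $\mathrm{Gs}(\Sigma,\Phi)$ suitably set up) and define a ``right translation'' action, mimicking Pratt's Cayley construction. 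For $\mathtt{a}\in\Sigma$, let $\mathsf{rel}_M(\mathtt{a})$ send a state $x$ with $\mathsf{tail}(x)=G$ to each state $x\diamond G\mathtt{a}H$ with weight $1$; for $\mathtt{p}\in\Phi$, let $\mathsf{sat}_M(\mathtt{p})$ be the diagonal matrix that is $1$ on states $x$ with $\mathsf{tail}(x)\models\mathtt{p}$ and $0$ otherwise. The key lemma to establish is then a correspondence of the form
\[ M(g)_{x,y} \;=\; \sum\{\,\Gu(g)(\tau)\mid x\diamond\tau=y\,\} \]
for all expressions $g$, proved by induction on $g$; the crucial cases are $\cdot$ (using associativity/definedness of fusion product and the matrix-multiplication definition of composition) and $\,^{*}$ (using the path characterisation $M^{*}_{x,y}=\sum_{\pi\in[x,y]}M(\pi)$ together with the factor-sum definition of $r^{*}$). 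Picking a base state $x_0\in\mathrm{At}(\Phi)$ with $x_0=\mathsf{head}(\sigma)$ and $y_0=x_0\diamond\sigma$, this yields $M(e)_{x_0,y_0}=\Gu(e)(\sigma)\neq\Gu(f)(\sigma)=M(f)_{x_0,y_0}$, so $M(e)\neq M(f)$, as required.

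The main obstacle I expect is the inductive correspondence lemma, specifically the star case: one must check that the $S$-weighted reflexive transitive closure of the translation matrix, summed over all paths between two guarded-string states, reproduces exactly the guarded-language star $\Gu(e^{*})$, which sums products of coefficients over all factorisations. This requires care because paths in the transition system may pass through intermediate states, and one must verify that the decomposition of a guarded string into fusion-product factors matches the decomposition of a path into single transition steps bijectively (weights multiplying correctly along the way), and that spurious paths contribute $0$. A secondary subtlety is bookkeeping around units in $\mathrm{At}(\Phi)$ and ensuring $M$ really is a $\mathsf{KAT}(S)$-homomorphism so that $M(b)$ for a test $b$ lands on the correct diagonal matrix; integrality of $S$, already flagged as needed for Lemma~\ref{lem:hat}, plausibly resurfaces here in controlling the weights. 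Once the correspondence lemma is in hand, the theorem follows immediately by the chain sketched above.
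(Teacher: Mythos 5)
Your proposal is correct and takes essentially the same route as the paper: the soundness direction via Lemma~\ref{lem:Rel}, and the completeness direction via the Cayley-style map $r\mapsto\mathsf{cay}(r)$ with $\mathsf{cay}(r)_{w,v}=r(u)$ when $v=w\diamond u$ (your correspondence lemma is exactly the paper's claim $\mathsf{cay}(\Gu(e))=M(e)$, proved by induction with the same delicate $\cdot$ and $^{*}$ cases), followed by an appeal to Corollary~\ref{coro:Lan}. The only cosmetic difference is that the paper phrases the final step as injectivity of $\mathsf{cay}$ rather than exhibiting the witness entry $(\mathsf{head}(\sigma),\sigma)$, which amounts to the same computation.
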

\begin{proof}
The implication from left to right follows from Lemma \ref{lem:Rel}. The converse implication is established as follows. Define $\mathsf{cay} : \fps{S^{\mathrm{rat}}}{\Sigma^{*}_{\Phi}} \to \mathrm{Mat}(\mathrm{Gs}(\Sigma, \Phi), S)$ as follows:
\[ \mathsf{cay}(r)_{w, v} = 
\begin{cases}
r(u) & v = w \diamond u\\ 0 & \text{otherwise.}
\end{cases}
\]
The function $\mathsf{cay}$ is injective. Indeed, if there is $w$ such that $r_0(w) \neq r_1(w)$, then take the head of $w$, that is, the unique $H \in \mathrm{At}(\Phi)$ such that $w = H \diamond w$. Note that $\mathsf{cay}(r_0)_{H,w} = r_0(w) \neq r_1(w) = \mathsf{cay}(r_1)_{H, w}$. Define $M = \langle \mathrm{Gs}, \mathsf{rel}_M, \mathsf{sat}_M\rangle$ by $\mathsf{rel}_M(\mathtt{a}) = \mathsf{cay}(\Gu (\mathtt{a}))$ for all $\mathtt{a} \in \Sigma$ and $\mathsf{sat}_M(\mathtt{p}) = \mathsf{cay}(\Gu (\mathtt{p}))$ for all $\mathtt{p} \in \Phi$. Now we claim the following:
\begin{claim}
For all $e$, $\mathsf{cay}(\Gu (e)) = M(e)$.
\end{claim}
The theorem is then immediate: If $\mathsf{KAT}(S) \not\models e \approx f$, then $\Gu (e) \neq \Gu (f)$ by Corollary \ref{coro:Lan}, hence $\mathsf{cay}(\Gu (e)) \neq \mathsf{cay}(\Gu (f))$, and so $M(e) \neq M(f)$. It remains to prove the claim, which we do in Appendix \ref{a:Rel-complet}.
\end{proof}

\section{Conclusion}\label{sec:conclusion}

We extended Ésik and Kuich's \cite{EsikKuich2001} completeness result for finitely weighted Kleene algebras to finitely valued Kleene algebras with tests. This result contributes, we hope, to a better understanding of the properties of frameworks that may be used to formalize reasoning about weighted programs \cite{BatzEtAl2022}.

A number of interesting problems need to be left for the future. First, we would like to prove that the equational theory of $\mathsf{KAT}(S)$ is decidable and establish its complexity. Second, we will work on adapting Kozen and Smith's \cite{KozenSmith1997} argument showing that the quasi-equational theory of $\mathsf{KAT}(S)$ where all assumptions are of the form $e \approx \mathtt{0}$ reduces to the equational theory. Third, it would be interesting to extend Ésik and Kuich's completeness result to Kleene algebras weighted in a more general class of semirings (including, for instance, the {\L}ukasiewicz  semiring) and then to use this result to provide a similar generalization for weighted Kleene algebras with tests. Indeed, it would be interesting to figure out a weighted generalization of other completeness proofs for Kleene algebra, such as \cite{Kappe2023a}. An intriguing problem is also to extend our result to variants of Kleene algebra with tests where tests do not form a Boolean algebra \cite{GomesEtAl2019}.



\subsubsection*{Acknowledgement}
This work was supported by the grant 22-16111S of the Czech Science Foundation. The author is grateful to three anonymous reviewers for their comments.


\appendix
\section{Technical appendix}

\subsection{Proof of Lemma \ref{lem:hat}}\label{a:hat}

It is clear that if $\mathsf{KAT} \models e \approx f$ for $e,f$ without occurrences of $\odot$, then $\mathsf{KAT}(S) \models e \approx f$ for all $S$. We write $e \equiv f$ instead of $\mathsf{KAT}(S) \models e \approx f$. A Kleene $S$-algebra model for $\mathrm{Exp}(\Sigma \cup \Lambda, S)$ is defined as expected (elements of $\Lambda$ are mapped into arbitrary elements of the model). If $\mathsf{KA}(S) \models e \approx f$ for $e, f \in \mathrm{Exp}(\Sigma \cup \Lambda, S)$, then we write $e \equiv_{\mathsf{KA}} f$. Recall that we write $\mathrm{Exp}$ instead of $\mathrm{Exp}(\Sigma, \Phi, S)$ and $\mathrm{Exp}_{\Lambda}$ instead of $\mathrm{Exp}(\Sigma \cup \Lambda, S)$.

\begin{lemma}\label{lem:guardedsums}
The following hold:
\begin{enumerate}
\item $\mathtt{1} \equiv \Gone$ and $\mathtt{0} \equiv \Gzero$;
\item $e \Gplus e' \equiv e + f$ for all guarded sums $e$ and $e'$;
\item $e \G{\odot s} \equiv e \odot s$;
\item $e \Gdot e' \equiv e \cdot e'$ for all guarded sums $e$ and $e'$;
\item $e^{*} \equiv e^{\Gstar}$ for all guarded sums $e$.
\end{enumerate}
\end{lemma}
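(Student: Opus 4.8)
The plan is to establish (1)--(4) directly from the right $S$-semimodule axioms, the commutation law \eqref{e:s-comm}, and a few elementary $\mathsf{KAT}$-identities, and then to prove (5) by induction on the number of summands of the guarded sum, freely using (1)--(4) (which do not depend on (5)). Throughout I write $\equiv$ for $\mathsf{KAT}(S)$-provability; the facts I would pull off the shelf are: $x \odot 1 \equiv x$, $(x \odot t) \odot s \equiv x \odot (ts)$, and $(x+y) \odot s \equiv (x \odot s) + (y \odot s)$ (semimodule axioms); $\mathtt{0} \odot s \equiv \mathtt{0}$ (from \eqref{e:s-comm} and $\mathtt{0} \cdot z \equiv \mathtt{0}$); $G \cdot G \equiv G$ and $G \cdot H \equiv \mathtt{0}$ for distinct $\Phi$-atoms (commutativity of tests together with $\mathtt{p}\mathtt{p} \equiv \mathtt{p}$ and $\mathtt{p}\bar{\mathtt{p}} \equiv \mathtt{0}$); and $\sum_{G \in \mathrm{At}(\Phi)} G \equiv \mathtt{1}$, which is just the expansion of $\prod_{i \le n}(\mathtt{p}_i + \bar{\mathtt{p}}_i)$.

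For (1), $\Gone = \sum_{G} (G \odot 1) \equiv \sum_G G \equiv \mathtt{1}$ and $\Gzero = \mathtt{0} \odot 1 \equiv \mathtt{0}$. Item (2) is immediate since $\Gplus$ is literally $+$. For (3), writing a guarded sum as $e = \sum_i (g_i \odot s_i)$, one has $e \odot s \equiv \sum_i ((g_i \odot s_i) \odot s) \equiv \sum_i (g_i \odot (s_i s)) = e \G{\odot s}$, using distributivity of $\odot$ over $+$ and the action axiom. For (4), distributivity of $\cdot$ over $+$ reduces matters to $g \Gdia h \equiv g \cdot h$ for guarded expressions: if $\mathsf{tail}(g) \ne \mathsf{head}(h)$ then $g \cdot h$ contains the factor $\mathsf{tail}(g)\cdot\mathsf{head}(h) \equiv \mathtt{0}$, so both sides collapse to $\mathtt{0}$; if $\mathsf{tail}(g) = \mathsf{head}(h)$ then I would collect the weights of $g$ and $h$ into one $\odot$ via \eqref{e:s-comm} and commutativity of $S$, and contract the repeated atom via $G\cdot G \equiv G$, reaching $g \Gdia h$ exactly. (A handful of sub-cases, according to whether $g, h$ are atom-only or bodied, all take this form.)

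The substance is (5), by induction on the number $n$ of summands, along the recursion of Definition \ref{def:guarded-sum-operations}. For the base cases, $(\mathtt{0} \odot 1)^{\Gstar} = \Gone$ follows from $(\mathtt{0} \odot 1)^{*} \equiv \mathtt{0}^{*} \equiv \mathtt{1} \equiv \Gone$; $(G \odot s)^{\Gstar} = \Gone$ needs $(G \odot s)^{*} \equiv \mathtt{1}$, and this is where integrality enters (as flagged in the proof of Lemma \ref{lem:hat}): $s \le 1$ gives $G \odot s \le G \odot 1 = G \le \mathtt{1}$, so $\mathtt{1} \le (G \odot s)^{*} \le \mathtt{1}^{*} = \mathtt{1}$. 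For a single guarded string $GeH \odot s$ with $G \ne H$, the expression squares to $\mathtt{0}$ (it contains $H \cdot G \equiv \mathtt{0}$), and $x^{2} \equiv \mathtt{0}$ forces $x^{*} \equiv \mathtt{1} + x$ in any Kleene algebra; for $G = H$, I would rewrite $GeG \odot s \equiv GyG$ with $y = e \odot s$ via \eqref{e:s-comm}, then use the sliding law to obtain $(GyG)^{*} \equiv \mathtt{1} + Gy(Gy)^{*}G$, and rewrite the defined value into the same shape (again via \eqref{e:s-comm}). All three match the defined values modulo $\Gone \equiv \mathtt{1}$.

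For the inductive step, the right-hand side $e_0^{\Gstar} + (e_0^{\Gstar} \Gdot f^{\Gstar} \Gdot g \Gdot e_0^{\Gstar})$ is, by (1)--(4), the induction hypothesis $e_0^{\Gstar} \equiv e_0^{*}$, and the single-string base case $f^{\Gstar} \equiv f^{*}$ (here $f$ has equal head and tail), provably equal to $e_0^{*} + e_0^{*} f^{*} g\, e_0^{*}$; so it suffices to prove $(e_0 + g)^{*} \equiv e_0^{*} + e_0^{*} f^{*} g\, e_0^{*}$, which the denesting law $(x+y)^{*} \equiv x^{*}(y x^{*})^{*}$ reduces to $(g\, e_0^{*})^{*} \equiv \mathtt{1} + f^{*} g\, e_0^{*}$. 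Writing $G' = \mathsf{head}(g)$, one has $g\, e_0^{*} \equiv G' c$ and $f \equiv G' c G'$ for a suitable $c$ (pushing the weight of $g$ outward and absorbing $\mathsf{body}(g)$ and $e_0^{*}$ into $c$ via \eqref{e:s-comm}), whence $(G' c)^{*} \equiv \mathtt{1} + (G'c)(G'c)^{*}$ after unrolling once and $f^{*} \equiv \mathtt{1} + G'c(G'c)^{*}G'$ by the $G = H$ base case, and the two sides coincide using only $xx^{*} \equiv x^{*}x$ and $G' \cdot G' \equiv G'$. I expect the main obstacle to be precisely this inductive step: coordinating the Kleene-algebra rewriting (denesting, sliding, commuting $^{*}$ past the test $G'$) with the weight-shuffling done by $\Gdia$, $\G{\odot s}$ and $^{\Gstar}$, and confirming that the re-guarding term $f$ of Definition \ref{def:guarded-sum-operations} matches the combinatorics on the nose; everything else is routine bookkeeping, though with several small case splits, and it would be worth recording explicitly that integrality is used only for $(G \odot s)^{\Gstar} = \Gone$.
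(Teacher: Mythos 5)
Your proposal is correct and follows essentially the same route as the paper: items (1)--(4) via the semimodule/commutation axioms and atom arithmetic, and item (5) by induction on the number of summands with the same three base cases (integrality entering exactly at $(G \odot s)^{\Gstar} = \Gone$, nilpotence of the square for $G \neq H$, sliding for $G = H$) and the same denesting-plus-sliding argument in the inductive step. The only difference is organizational: you isolate $(g\, e_0^{*})^{*} \equiv \mathtt{1} + f^{*} g\, e_0^{*}$ as an intermediate target, whereas the paper runs a single equational chain.
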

\begin{proof}
The first three claims are straightforward. The fourth claim follows from the fact that $g_1 \bullet g_2 \equiv g_1 \cdot g_2$ for all guarded expressions $g_1, g_2$.

The fifth claim is established by induction on the number $n$ of guarded expressions in the sum $e = g_1 + \ldots + g_n$. If $n = 1$, then we have three possibilities to consider:
\begin{itemize}
\item $e = \mathtt{0} \odot 1$. Then $e^{*} \equiv \mathtt{0}^{*} \equiv \mathtt{1} \equiv \Gone = (\mathtt{0}\odot 1)^{\Gstar} = e^{\Gstar}$.
\item $e = G \odot s$. Then $e^{*} \equiv (G \odot s)^{*} \equiv \mathtt{1} \equiv \Gone = e^{\Gstar}$. We note that $(G \odot s)^{*} \equiv \mathtt{1}$ holds since $G \leqq \mathtt{1}$ (meaning that $G + \mathtt{1} \equiv \mathtt{1}$) and $s \leq 1$ imply $(G \odot s)^{*} \leqq (\mathtt{1} \odot 1)^{*} \equiv \mathtt{1}^{*} \equiv \mathtt{1}$. The argument does not seem to go through if $s \not\leq 1$.
\item $e = Ge_1H \odot s$. Let us define $f  = Ge_1H$ to simplify notation. Then $e^{*} \equiv \mathtt{1} + (f \odot s)^{*}(f \odot s)$. We have two possibilities, depending on whether $G = H$:
\begin{enumerate}
\item[(a)] If $G \neq H$, then $e^{\Gstar} = \Gone + (f \odot s)$, and it is sufficient to show that $(f \odot s)^{*}(f \odot s) \equiv (f \odot s)$. The $\geqq$ claim follows from $(f \odot s)^{*} \geqq 1$ and the $\leqq$ claim is established by applying the right-fixpoint law of Kleene algebra to the assumption $(f \odot s) + (f \odot s)(f \odot s) \leqq (f \odot s)$. This assumption holds since $(f \odot s)(f \odot s) \leqq (f \odot s)$ follows from the fact that $(f \odot s)(f \odot s) \equiv (ff \odot s) \odot s \equiv \mathtt{0}$ ($ff \equiv \mathtt{0}$ since $G \neq H$).
\item[(b)] If $G = H$, then we reason as follows:
\begin{gather*}
e^{*} \equiv 1 + (f \odot s)(f \odot s)^{*} \equiv
1 + \left ( Ge_1 (H \odot s) \right ) \left  ( Ge_1 (H \odot s) \right )^{*} \equiv\\
1 + Ge_1 \left ( (H \odot s) Ge_1 \right )^{*} (H \odot s) \equiv
1 +  Ge_1 ( HGe_1 \odot s)^{*} (H \odot s) \equiv\\
\Gone + \left  ( Ge_1(Ge_1 \odot s)^{*}H \odot s \right ) \equiv
e^{\Gstar} \, .
\end{gather*}
\end{enumerate}
\end{itemize} 
Now assume that the claim holds of all guarded sums $f = g_1 + \ldots + g_m$ for some $g_1, \ldots, g_m$, and take $e = g_1 + \ldots + g_m + g$ for some $g$. We reason as follows, writing $f$ for $g_1 + \ldots + g_m$: 
\begin{gather*}
e^{*} = (f + g)^{*} \equiv f^{*} (g f^{*})^{*} \equiv 
f^{*} + f^{*} g f^{*} (g f^{*})^{*}\\
\equiv f^{*} + f^{*} g f^{*} \left (\mathsf{head}(g) g f^{*} \right )^{*} \equiv
f^{*} + f^{*} \left  ( g f^{*} \mathsf{head}(g) \right ) g f^{*} \\
\equiv f^{*} + f^{*} \left  ( \mathsf{head}(g) \mathsf{body}(g) f^{*} \mathsf{head}(g)  \odot \mathsf{weight}(g)\right )^{*} g f^{*} \\
\equiv f^{\Gstar} + f^{\Gstar} \left  ( \mathsf{head}(g) \mathsf{body}(g) f^{\Gstar} \mathsf{head}(g)  \odot \mathsf{weight}(g)\right )^{*} g f^{\Gstar}
\equiv (f + g)^{\Gstar} \equiv e^{\Gstar}
\end{gather*}
\end{proof}


\begin{definition}
A guarded expression $e$ is called \emph{proper} iff $\Ln (e) = \Gu (e)$. A guarded sum is called proper iff it is a sum of proper guarded expressions.
\end{definition}

\begin{lemma}\label{lem:proper}
The set of proper guarded sums is closed under $\Gone$, $\Gzero$, $\Gplus$, $\Gdot$, $\G{\oplus s}$ and $\,^{\Gstar}$.
\end{lemma}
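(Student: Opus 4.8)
The plan is to verify closure under each of the six operations $\Gone$, $\Gzero$, $\Gplus$, $\Gdot$, $\G{\odot s}$, $\,^{\Gstar}$ separately, exploiting the fact that both $\Ln$ and $\Gu$ are homomorphisms and that $\fps{S^{\mathrm{rat}}}{\Sigma^*_\Phi}$ embeds (via forgetting the tests and reading guarded strings as words over $\Sigma \cup \Lambda$) into a subalgebra of $\fps{S^{\mathrm{rat}}}{(\Sigma\cup\Lambda)^*}$. The guiding observation is that $\Gu(e)$ differs from $\Ln(e)$ essentially by discarding the "non-well-formed" words — those that are not genuine guarded strings, e.g.\ words containing two consecutive literals that do not form an atom, or two consecutive atoms, or an atom not flanked correctly — so a guarded expression is proper exactly when $\Ln(e)$ already has zero coefficient on every non-well-formed word. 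For $\Gone$ and $\Gzero$ the claim is immediate since $\Ln(\Gone)=\Ln(\sum_{G}G\odot 1)=\sum_G 1_{\{G\}}=1=\Gu(\Gone)$ and similarly for $\Gzero$.

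The easy cases are $\Gplus$ and $\G{\odot s}$: since $\Ln$ and $\Gu$ are both homomorphisms and proper guarded sums are sums of proper guarded expressions, $\Ln(e \Gplus f)=\Ln(e)+\Ln(f)=\Gu(e)+\Gu(f)=\Gu(e\Gplus f)$, using Lemma~\ref{lem:guardedsums}(2) to identify $\Gplus$ with $+$; and likewise for $\G{\odot s}$ via Lemma~\ref{lem:guardedsums}(3), noting scalar multiplication is pointwise in both semirings. For $\Gdot$ it suffices, by distributivity and Lemma~\ref{lem:guardedsums}(4), to check that the fusion-style product $g_1 \Gdia g_2$ of two proper guarded expressions is proper; here one unwinds the definition of $\Gdia$ and observes that when $\mathsf{tail}(g_1)=\mathsf{head}(g_2)$ the word-level concatenation of the supports of $\Ln(g_1)$ and $\Ln(g_2)$ lands on well-formed guarded strings (the shared atom glues correctly), while when $\mathsf{tail}(g_1)\neq\mathsf{head}(g_2)$ every concatenation produces a non-well-formed word, which $\Gu$ kills — matching $\Gdia$'s output $\Gzero$. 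One must be slightly careful that $g_1$ may be a pure-atom expression $G\odot s$ with empty body, but the bookkeeping of $\mathsf{head}$, $\mathsf{tail}$, $\mathsf{body}$ in Definition~\ref{def:guarded-sum} handles exactly this.

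The main obstacle is closure under $\,^{\Gstar}$, because $\,^{\Gstar}$ is defined by a three-case recursion on the number of summands, and in the recursive case~3 the auxiliary guarded expression $f=(\mathsf{head}(g)\mathsf{body}(g)(e^{\Gstar})\mathsf{head}(g))\odot\mathsf{weight}(g)$ has a whole guarded sum $e^{\Gstar}$ sitting inside it — so one needs the inductive hypothesis both "horizontally" (fewer summands) and compositionally (properness of $e^{\Gstar}$ feeding into $f$). I would prove properness of $g^{\Gstar}$ for a single proper guarded expression $g$ first, treating the three subcases of clause~2 of Definition~\ref{def:guarded-sum-operations}: the $G\neq H$ case is a plain sum of proper expressions, and the $G=H$ case requires checking $Ge(Ge\odot s)^*H\odot s$ is proper, which reduces to showing $\Ln$ and $\Gu$ agree on $(Ge\odot s)^*$ — here all intermediate words in the star iteration are forced to be well-formed because they begin and end with the same atom $G=H$ and concatenate the proper body $e$, so no spurious words arise (this is also where the proof of Lemma~\ref{lem:hat} uses integrality, but for \emph{this} lemma it is purely a support-tracking argument). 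Then for clause~3 I would use the induction hypothesis to get $e^{\Gstar}$ proper, deduce $f$ proper (it is a single guarded expression built by the already-settled $\Gdot$-type gluing around the proper $e^{\Gstar}$), get $f^{\Gstar}$ proper by the single-expression case, and finally assemble $e^{\Gstar}+(e^{\Gstar}\Gdot f^{\Gstar}\Gdot g\Gdot e^{\Gstar})$ from closure under $\Gplus$ and $\Gdot$ already established. The only real subtlety is confirming that each $\Gdia$-product occurring in these composite expressions has matching $\mathsf{tail}/\mathsf{head}$ so that no term silently collapses to $\Gzero$ in one interpretation but not the other — but since $\Gu$ and $\Ln$ see the same head/tail data, this never happens.
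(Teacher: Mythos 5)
The paper states Lemma~\ref{lem:proper} with no proof at all, so there is no argument of the author's to measure yours against; what you have written is the natural way to discharge it, and its architecture is sound. The cases $\Gone$, $\Gzero$, $\Gplus$, $\G{\odot s}$ and the reduction of $\Gdot$ to properness of single products $g_1 \Gdia g_2$ are exactly as routine as you say, and your induction for $\,^{\Gstar}$ correctly mirrors the recursion in Definition~\ref{def:guarded-sum-operations}: settle the single-summand cases first, then in clause~3 feed the properness of $e^{\Gstar}$ into the auxiliary expression $f$, apply the single-expression case to get $f^{\Gstar}$ proper, and assemble the result from the already-established closure under $\Gplus$ and $\Gdot$. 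Your observation that integrality is needed for Lemma~\ref{lem:hat}(1) (via $(G\odot s)^{*}\equiv \mathtt{1}$) but not for this support/coefficient comparison also matches the paper.

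The one place where you are still gesturing rather than proving is the $G=H$ subcase of clause~2, and that is where essentially all the content of the lemma sits. Calling it ``purely a support-tracking argument'' undersells it: properness requires the \emph{coefficients} of $\Ln$ and $\Gu$ to agree on every guarded string, not merely the supports, and the two interpretations aggregate weights over different factorization notions --- $\Ln$ sums over word concatenations $w=w_1w_2$, in which the gluing atom appears in only one factor, while $\Gu$ sums over fusion decompositions $w=v_1\diamond v_2$, in which the gluing atom is shared by both factors. A priori these index sets are in bijection only because of the specific shape of the normal form (the inner iterand is $Ge\odot s$ \emph{without} a trailing atom, and the single closing $H$ is appended once at the end), and verifying that the bijection preserves the products of coefficients is precisely the computation you need to write out for $Ge(Ge\odot s)^{*}H\odot s$. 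Note also that you cannot appeal to properness of the subexpression $Ge\odot s$ itself --- its $\Ln$-support consists of words that are not guarded strings --- so the argument must be carried out for the whole expression at once. With that computation supplied, the proof is complete; without it, the crux of the lemma is asserted rather than proved.
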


\noindent
\textbf{Lemma \ref{lem:hat}}. {\itshape For all $e \in \mathrm{Exp}$:}
\begin{enumerate}
\item $e \equiv \widehat{e}$;
\item $\Ln (\widehat{e}) = \Gu (\widehat{e})$
\end{enumerate}
\begin{proof}
Both claims are established by induction on the complexity of $e$. 

1. The base case for $e \in \Sigma \cup \Lambda$ is easily established using the semimodule axioms for $\odot$, which establish that $(e + f) \odot s \equiv (e \odot s) + (f \odot s)$ and $e \odot 1 \equiv e$,  and $\mathsf{KAT}$ reasoning. The induction step is established using Lemma \ref{lem:guardedsums}. 

2. The base case follows from Definition \ref{def:hat} and the induction step is easily established using Lemma \ref{lem:proper}.
\end{proof}

\subsection{Proof of Theorem \ref{thm:Rel}}\label{a:Rel-complet}
Let  $M = \langle \mathrm{Gs}, \mathsf{rel}_M, \mathsf{sat}_M\rangle$ where $\mathsf{rel}_M(\mathtt{a}) = \mathsf{cay}(\Gu (\mathtt{a}))$ for $\mathtt{a} \in \Sigma$ and $\mathsf{sat}_M(\mathtt{p}) = \mathsf{cay}(\Gu (\mathtt{p}))$ for $\mathtt{p} \in \Phi$. We prove by induction on the structure of $e \in \mathrm{Exp}(\Sigma, \Phi, S)$ that $\mathsf{cay}(\Gu (e)) = M(e)$. We write $[e]$ instead of $\mathsf{cay}(\Gu (e))$.

The base cases for $\mathtt{a} \in \Sigma$ and $\mathtt{p} \in \Phi$ hold by definition of $M$. The case for $\mathtt{0}$ is trivial. The case for $\mathtt{1}$ follows from the definition of $\Gu$. In particular, $\Gu (\mathtt{1}) = 1_{\mathrm{At}}$ is the function that assigns $1$ to atoms and $0$ to all other guarded strings. Hence, $\mathsf{cay}(1_{\mathrm{At}})$ is the identity matrix.

Induction step for $e \odot s$ is established by showing that $[e \odot s]_{w, v} = [e]_{w, v} \cdot s$. It is sufficient to check this in the case that $w = v \diamond u$. Then $[e \odot s]_{w, v} = (\Gu (e) \odot s)(u) = \Gu (e)(u) \cdot s = [e]_{w, v} \cdot s$ by the induction hypothesis.

The case for $e + f$ is routine. The case for $e \cdot f$ is established by showing that $[e \cdot f]_{w,v} = \sum_{u \in \mathrm{Gs}} [e]_{w, u} \cdot [f]_{u, v}$. It is sufficient to check this for $w, v$ such that $v = w \diamond w'$. Then 
 \begin{gather*}
 [e \cdot f]_{w, v} = \Gu (e \cdot f)(w')
 = \sum \{ \Gu (e)(u') \cdot \Gu (f)(v') \mid w' = u' \diamond v'\}\\
 = \sum \{ \Gu (e)(u') \cdot \Gu (f)(v') \mid v = w \diamond (u' \diamond v') \}\\
 = \sum \{ \Gu (e)(u') \cdot \Gu (f)(v') \mid v = u \diamond v' \And u = w \diamond u' \}\\
 = \sum \{ [e]_{w, u} \cdot [f]_{u, v} \mid v = u \diamond v' \And u = w \diamond u' \}\\
 = \sum_{u \in \mathrm{Gs}} [e]_{w, u} \cdot [f]_{u, v}\, .
 \end{gather*}
The case for $e^{*}$ is established by showing that $[e^{*}]_{w, v} = \sum
_{n \in \omega} [e]^{n}_{w, v}$. This is implied by the following claim which is established easily by induction on $n$:

\begin{claim}
For all $n \in \omega$: for all $e \in \mathrm{Exp}$ and all $w, u, v \in \mathrm{Gs}$, if $v = w \diamond u$, then $\Gu (e)^{n}(u) = \sum_{x \in [w,v], |x| = n } M(e)(x)$.
\end{claim}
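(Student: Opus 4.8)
The plan is to prove the claim by induction on $n$, unwinding the definitions of $\mathsf{cay}$ and of multiplication in $\fps{S}{\Sigma^{*}_{\Phi}}$ and exploiting that $\langle \mathrm{Gs}(\Sigma,\Phi), \diamond, \mathrm{At}(\Phi)\rangle$ is a partial monoid. Write $N = \mathsf{cay}(\Gu(e))$; since in the proof of the $e^{*}$-case the claim is invoked only for $e$ (an immediate subexpression), the structural induction hypothesis supplies $N = M(e)$, so it suffices to argue with $N$. The conceptual core is a bijection: for fixed $w$, the assignment $u \mapsto w \diamond u$ identifies $\{ u \mid \mathsf{head}(u) = \mathsf{tail}(w)\}$ with the set of guarded strings having $w$ as a prefix, so $u$ is uniquely recoverable from $w$ and $v = w \diamond u$; and a path $x \in [w,v]$ with $N(x) \neq 0$ is precisely a chain $w = q_{0}$, $q_{i+1} = q_{i} \diamond u_{i}$ with $N_{q_{i},q_{i+1}} = \Gu(e)(u_{i})$, whence by associativity of $\diamond$ one gets $u = u_{0} \diamond \cdots \diamond u_{n-1}$ and $N(x) = \Gu(e)(u_{0}) \cdots \Gu(e)(u_{n-1})$, which is exactly a summand of $\Gu(e)^{n}(u)$.

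For the base case $n = 0$, the only path in $[w,v]$ of length $0$ is the single vertex $w$, which lies in $[w,v]$ iff $w = v$, and then has empty-product weight $1$; on the other side $\Gu(e)^{0}(u) = 1(u)$ is $1$ iff $u \in \mathrm{At}(\Phi)$. Since $v = w \diamond u$ by hypothesis, $w = v$ holds iff $u = \mathsf{tail}(w) \in \mathrm{At}(\Phi)$, so the two sides agree. For the inductive step, expand $\Gu(e)^{n+1} = \Gu(e)^{n} \cdot \Gu(e)$ by the definition of $\cdot$: $\Gu(e)^{n+1}(u) = \sum\{ \Gu(e)^{n}(u') \cdot \Gu(e)(u'') \mid u = u' \diamond u''\}$. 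Given such a factorisation, set $v' = w \diamond u'$ (defined since $\mathsf{head}(u') = \mathsf{head}(u) = \mathsf{tail}(w)$); then $v = v' \diamond u''$ by associativity, the induction hypothesis rewrites $\Gu(e)^{n}(u')$ as $\sum_{x \in [w,v'],\, |x| = n} N(x)$, and $\Gu(e)(u'') = N_{v',v}$ since $v = v' \diamond u''$. Appending the edge $v' \to v$ to each length-$n$ path in $[w,v']$ produces a length-$(n+1)$ path in $[w,v]$; conversely, every length-$(n+1)$ path in $[w,v]$ of nonzero weight decomposes uniquely into its initial length-$n$ segment (ending at some vertex $v'$) and its final edge $v' \to v$, the latter determining the suffix $u''$ with $v = v' \diamond u''$. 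Re-indexing the triple sum over $(v', u'', x)$ as a single sum over paths then yields $\Gu(e)^{n+1}(u) = \sum_{y \in [w,v],\, |y| = n+1} N(y)$.

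The step I expect to be fiddly rather than deep is this last re-indexing: one must check that the decomposition ``initial length-$n$ segment plus final edge'' is a genuine bijection between the length-$(n+1)$ paths of nonzero weight in $[w,v]$ and the pairs consisting of a factorisation $u = u' \diamond u''$ and a length-$n$ path in $[w, w \diamond u']$ — using uniqueness of the suffix determined by $w$ and $v$, the fact that every vertex on a path out of $w$ has $w$ as a prefix (so no spurious intermediate $v'$ arises), and associativity of $\diamond$ to confirm that $u' \diamond u''$ is defined and equals $u$ — and that the weights match termwise, which is the one point where $\mathsf{cay}(\Gu(e)) = M(e)$ is used, to identify $\Gu(e)(u'')$ with $M(e)_{v',v}$. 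With the claim established, summing over $n \in \omega$ and using $M(e^{*}) = M(e)^{*} = \sum_{n} M(e)^{n}$ (again via the structural hypothesis for $e$) gives $[e^{*}]_{w,v} = \sum_{n} [e]^{n}_{w,v} = M(e^{*})_{w,v}$ whenever $v = w \diamond u$, while both sides vanish otherwise because no path joins $w$ to $v$; this closes the $e^{*}$-case and hence the induction proving $\mathsf{cay}(\Gu(e)) = M(e)$.
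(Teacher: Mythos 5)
Your proof is correct and follows exactly the route the paper intends: the paper dismisses this claim with ``established easily by induction on $n$,'' and your induction on $n$ (base case via the unit $1_{\mathrm{At}}$, inductive step via the bijection between factorisations $u = u' \diamond u''$ with length-$n$ paths to $w \diamond u'$ and nonzero-weight length-$(n{+}1)$ paths in $[w,v]$) supplies precisely the omitted details. Your explicit observation that the identification $\mathsf{cay}(\Gu(e)) = M(e)$ comes from the outer structural induction hypothesis, so no circularity arises, is a point the paper glosses over and is worth making.
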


The case for $\bar{b}$ is established by showing that $\cayG (\bar{b})_{w, v} = \overline{[b]}_{w, v}$. Since $[b]$ is a diagonal matrix, it is sufficient to check the case $w = v$. Then $[\bar{b}]_{w,w} = \Gu (\bar{b})(G)$ for the tail $G$ of $w$, that is, the unique $G \in \mathrm{At}$ such that $w = w \diamond G$. It follows that $\Gu (\bar{b})(G) = \overline{\Gu (b)(G)} = \overline{[b]}_{w,w}$.

\end{document}